\newtheorem{fact}{Fact}
\newtheorem{lemma}{Lemma}
\newtheorem{proof}{Proof}
\newtheorem{remark}{Remark}
\newtheorem{theorem}{Theorem}
\newcolumntype{R}{>{\raggedleft\arraybackslash}X}
\definecolor{Gray}{gray}{0.9}
\colorlet{vert}{green!70!black}
\colorlet{rouge}{red!70!black}
\colorlet{orange}{orange!100!black}
\colorlet{bleu}{cyan!80!white!80!black}
\colorlet{gris}{black!10!white}
\newcommand{\green}[1]{\textcolor{vert}{#1}}
\newcommand{\red}[1]{\textcolor{rouge}{#1}}
\newcolumntype{H}{>{\setbox0=\hbox\bgroup}c<{\egroup}@{}}
\def\qed{}
\newcommand{\calR}{\mathcal{R}}
\newcommand{\true}{\mathit{True}}
\newcommand{\false}{\mathit{False}}
\newcommand{\val}{\nu}
\newcommand{\semof}[1]{{\llbracket #1 \rrbracket}}
\newcommand{\langof}{\mathcal{L}}
\newcommand{\bexpr}{\beta}
\def\ie{{i.e.},~}
\def\eg{{e.g.},~}
\def\uppaal{\textsc{Uppaal}\xspace}
\def\imitator{\textsc{Imitator}\xspace}
\def\symrob{\textsc{symrob}\xspace}
\def\phaver{\textsc{PHAver}\xspace}
\def\tar{\textsc{rttar}\xspace}
\def\zthree{\textsc{Z3}\xspace}
\newcommand{\figref}[1]{\figurename~\ref{#1}}
\newcommand{\tabref}[1]{\tablename~\ref{#1}}
\newcommand{\algref}[1]{Algorithm~\ref{#1}}
\def\timed{\tau}
\def\setR{\mathbb{R}}
\def\setQ{\mathbb{Q}}
\def\lang{{\cal L}}
\def\tlang{{\cal TL}}
\def\calU{{\cal U}}
\def\calR{{\cal R}}
\def\calI{\Sigma}
\newcommand{\fut}[1]{\semof{#1}}
\def\unt{\textit{Unt}}
\def\post{\textit{Post}}
\def\enc{\textit{Enc}}
\def\vars{\textit{Vars}}
\def\ia{\textit{ITA}}
\def\test{\textit{Assume}}
\def\deidx{\pi}
\def\emptyset{\varnothing}
\newcommand{\prog}{P}
\newcommand{\states}{Q}
\newcommand{\astate}{q}
\newcommand{\alphabet}{\Sigma}
\newcommand{\lbl}{\alpha}
\newcommand{\accept}{F}
\newcommand{\automata}{\mathcal{A}}
\newcommand{\rvars}{V}
\newcommand{\csystem}{\varphi}
\newcommand{\updt}{\mu}
\newcommand{\rates}{\rho}
\newcommand{\guard}{\gamma}
\newcommand{\vals}{\mathcal{V}}
\newcommand{\tfunc}{\Delta}
\newcommand{\word}{\sigma}
\newcounter{mynote}
\newlength\mynotewidth
\tikzset{initial text={},
    every state/.style={circle,minimum size=.5cm,draw=blue!50,very thick,fill=blue!20},
    green/.style={rectangle, rounded corners,fill=vert,text=white,thick},
    red/.style={rectangle, rounded corners,fill=rouge,text=white,thick},
    adam/.style={rectangle, rounded corners,color=white,draw=black,thick,fill=black!80!white},
    module/.style={rectangle, rounded corners,color=white,draw=none,thick,fill=bleu},
    node distance=2cm and 3cm,
    error/.style={accepting,double distance=1.5pt,fill=red},
    infeas/.style={draw=red,accepting,double distance=1.5pt,fill=red},
    on grid,
    auto, 
   % bend angle=65
}
\begin{document}

\title{Verification and Parameter Synthesis for Real-Time Programs using Refinement of Trace Abstraction\footnote{A preliminary version of this work appeared in~\cite{rprttar}.}}

\author{Franck Cassez\\
  Department of Computing\\
  Macquarie University\\ 
  Sydney, Australia
\and  Peter Gj{\o}l Jensen \\
 Department of Computer Science\\ Aalborg University\\Denmark
\and Kim Guldstrand Larsen \\
 Department of Computer Science\\ Aalborg University\\Denmark
 }

\maketitle

\begin{abstract}
We address the safety verification and synthesis problems for real-time systems.
We introduce real-time programs that are made of instructions that can perform assignments to discrete and real-valued variables. They are general enough to capture interesting classes of timed systems such as timed automata, stopwatch automata, time(d) Petri nets and hybrid automata.
We propose a semi-algorithm using refinement of trace abstractions to solve both the reachability verification problem and the parameter synthesis problem for real-time programs.
% %
All of the algorithms proposed have been implemented and we have conducted a series of experiments, comparing the performance of our new approach to state-of-the-art
tools in classical reachability, robustness analysis and parameter synthesis for timed systems.
We show that our new method provides solutions to problems which are unsolvable by the current state-of-the-art tools.
\end{abstract}

\section{Introduction} \label{sec:intro}

Model-checking is a widely used formal method to assist in verifying software systems.
% detect unwanted, emerging behavior in large systems composed of several parallel sub-systems.
A wide range of model-checking techniques and tools are available and there are numerous successful applications in the safety-critical industry and the hardware industry -- in addition the approach is seeing an increasing adoption in the general software engineering community.
The main limitation of this formal verification technique is the so-called \emph{state explosion problem}.
\emph{Abstraction refinement techniques} were introduced to overcome this problem.
The most well-known technique is probably the \emph{Counter Example Guided Abstraction Refinement} (CEGAR) method pioneered by Clarke \emph{et al.}~\cite{Clarke2000}. In this method the state space is abstracted
with predicates on the concrete values of the program variables.
The (counter-example guided) \emph{trace abstraction refinement} (TAR) method was proposed later by Heizmann \emph{et al.}~\cite{traceref,HeizmannHP-cav-13} and is based on abstracting the set of traces of a program rather than the set of states.
These two techniques have been widely used in the context of software verification. Their effectiveness and versatility in verifying \emph{qualitative} (or functional) properties of C programs is reflected in the most recent \emph{Software Verification} competition results~\cite{DBLP:conf/tacas-2019-3}.

\paragraph{\bfseries Analysis of timed systems.}

Reasoning about \emph{quantitative} properties of programs requires
extended modeling features like real-time clocks.
\emph{Timed Automata}~\cite{AlurDill} (TA), introduced by Alur and Dill in 1989, is  a
very popular formalism to model real-time systems with dense-time clocks.
Efficient symbolic model-checking techniques for TA are implemented in the real-time model-checker \uppaal~\cite{uppaal}.
Extending TA, \eg with the ability to stop and resume clocks (stopwatches), leads to undecidability of the reachability problem~\cite{stopwatches,Henzinger199894} which is the basic verification problem.
As a result, semi-algorithms have been designed to verify extended classes of TA \eg \emph{hybrid automata},  and are implemented in a number of dedicated tools~\cite{phaver,spaceex,Henzinger97hytech}.
However, a common difficulty with the analysis of quantitative properties of timed automata and extensions thereof is that specialized data-structures are needed for each extension and each type of problem. As a consequence, the analysis tools have special-purpose efficient algorithms and data-structures suited and optimized only towards their specific problem and extension.

In this work we aim to provide a uniform solution to the analysis of timed systems by designing a generic semi-algorithm to analyze real-time programs which semantically captures a wide range of specification formalisms, including hybrid automata.
We demonstrate that our new method provides solutions to problems which are unsolvable by the current state-of-the-art tools.
We also show that our technique can be extended to solve specific problems like robustness and parameter synthesis.

\paragraph{\bfseries Related work.}

The \emph{trace abstraction refinement} (TAR) technique was  proposed by Heizmann \emph{et al.}~\cite{traceref,HeizmannHP-cav-13}.
Wang \emph{et al.}~\cite{Wang2014} proposed the use of TAR for the analysis of timed automata.
However, their approach is based on the computation of the standard \emph{zones} which comes with usual limitations: it is not applicable to extensions of TA (\eg stopwatch automata) and can only discover predicates that are zones. Moreover, their approach has not been implemented and it is not clear whether it can outperform state-of-the-art techniques \eg as implemented in \uppaal.

Several works have investigated CEGAR techniques in both timed and hybrid settings~\cite{10.1007/3-540-45873-5_6,cegartime,10.1007/978-3-540-78929-1_14,Tiwari2008}. 
The CEGAR technique has also been extended to parameter-synthesis~\cite{10.1007/978-3-540-78929-1_14}.
As proved by  Heizmann \emph{et al.}~\cite{traceref}, such methods are special cases of the TAR framework.

The \emph{IC3}~\cite{10.1007/978-3-642-18275-4_7} verification approach has also been deployed for the verification of hybrid systems~\cite{10.1007/978-3-319-21690-4_41} but relies on a fix-point computation over a combined encoding of the transition-function, rather than a trace-subtraction approach.
IC3 approaches and the likes have also been used for parameter synthesis~\cite{10.1007/3-540-48320-9_14,10.1007/978-3-319-21690-4_41,6679406}.
While similar fundamental techniques are leveraged in these approaches (e.g. \cite{10.1007/3-540-48320-9_14} utilizes Fourier-Motzkin-elimination), we note that our refinement method (TAR) is radically different in nature.
IC3 is an iterative fix-point computation over an up-front and complete encoding of the transition-function.

Since the publication of a preliminary version of this paper~\cite{rprttar}, Kafle\emph{ et al.}~\cite{kafle} have demonstrated a novel method of parameter synthesis for timed systems via Constrained Horn Clauses (CHC).
While their approach shows promising results for the Fischers parameter synthesis examples from \cite{rprttar}, it currently relies on manual translation of a given problem into CHC format, hindering its applicability to large systems.

As mentioned earlier, our technique allows for a unique and logical (predicates) representation of sets of states accross different models (timed, hybrid automata) and problems (reachability, robustness, parameter synthesis), which is in contrast
to state-of-the-art tools such as \uppaal~\cite{uppaal}, \textsc{SpaceEx}~\cite{spaceex}, \textsc{HyTech}~\cite{Henzinger97hytech}, \phaver~\cite{phaver}, \textsc{verifix}~\cite{verifix}, \symrob~\cite{symrob} and \imitator~\cite{imitator} that rely on special-purpose polyhedra libraries to realize their computation.

We propose a new technique which is radically different to previous approaches  and leverages the power of SMT-solvers to discover non-trivial invariants for a large class of real-time systems including the class of hybrid automata.
All the previous analysis techniques compute, reduce and check the state-space either up-front or on-the-fly, leading to the construction of significant parts of the state-space.
In contrast our approach is an abstraction refinement method and the refinements are built by discovering non-trivial program invariants that are not always expressible using zones, or  polyehdra. 
For instance they can express constraints that combine discrete and continuous variables of the system.
This enables us to use our algorithm on non-decidable classes like stopwatch automata, and successfully (\ie the algorithm terminates) check instances of these classes. A simple example is discussed in Section~\ref{sec:example}.

\paragraph{\bfseries Our contribution.}
We propose a variant of the trace abstractions refinement (TAR) technique to solve the reachability problem and the parameter synthesis problem for real-time programs.
Our approach combines  an automata-theoretic framework and  state-of-the-art Satisfiability Modulo Theory (SMT) techniques for discovering program invariants.
We demonstrate on a number of case-studies that this new approach can compute answers to problems unsolvable by special-purpose tools and algorithms in their respective domain.

This paper is an extended version of~\cite{rprttar} in which we first introduced TAR for real-time programs.
In this extended version, we provide a comprehensive introduction illustrated by more examples, extensions of the original algorithms from~\cite{rprttar} and the proofs of theorems and lemmas.

\section{Motivations} \label{sec:example}

\paragraph{\bf \itshape Real-Time Programs.}
\noindent \figref{fig-ex1} is an example of a real-time program $P_1$. 
It is defined by a finite automaton $A_1$ (\figref{fig-ex1}, top) which is the \emph{control flow graph (CFG)} of $P_1$, and
 some \emph{continuous and discrete instructions} (bottom).
The control flow graph $A_1$ accepts the regular language $\lang(A_1) = i.t_0.t_1^\ast.t_2$: the program starts in (control) location $\iota$ and is completed when $\ell_2$ (accepting location) is reached.
The program variables are $x, y, z$ which are real numbers.
This real-time program is the specification of a stopwatch automaton with 2 clocks, $x$ and $z$, and one stopwatch $y$.
The variables  are updated according to the following rules: 
\begin{itemize}
    \item Each edge's label defines a \emph{guard} $g$ (a condition on the variables) for which the edge is enabled, and an \emph{update} $u$ which is an assignment to the variables when the edge is taken.
    For instance the edge $t_1$ can be taken when the valuation of the variable $x$ is $1$ and when it is taken, $x$ is reset. This corresponds to a \emph{discrete} transition of the program. 

    \item Each location is associated with a \emph{rate vector} $r$ that defines the derivatives of the variables. 
    The default derivative for a variable is $1$ (we omit the rates for $x,z$ in the Figure).
    For instance in location $\ell_0$ the derivatives are $(1,0,1)$ (order is $x,y,z$). When the program is in location $\ell_0$ the variables $x,y,z$ increase at a rate defined by their respective derivatives:
    $x,z$ increase by $1$ each time unit, and $y$ is frozen (derivative is $0$). This corresponds to a \emph{continuous} transition of the program.
    
\end{itemize}

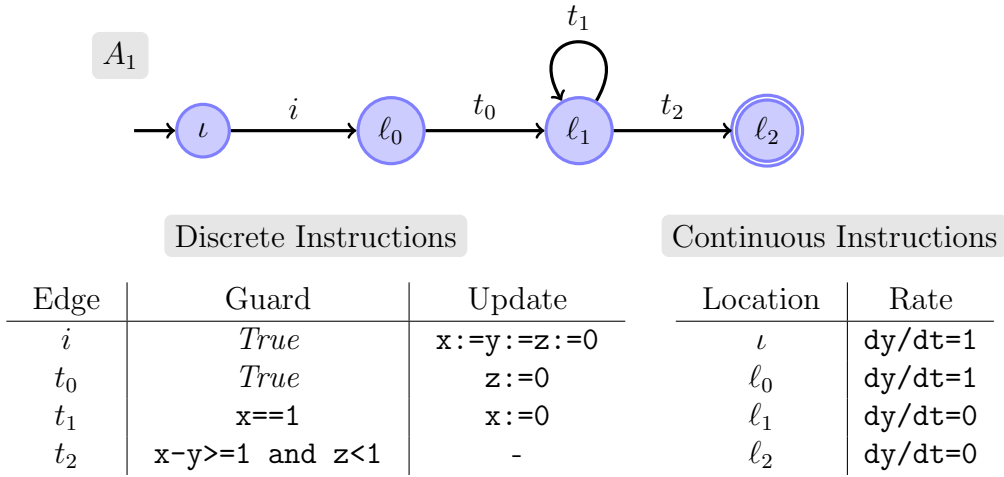
\begin{figure}[thbp]
    % \vspace*{-.2cm}
    \centering
    \begin{tikzpicture}[node distance=2.5cm and 1.5cm,very thick]
    %
        % \small
        \node[initial left,state,minimum size=0.7cm] (i) {$\iota$};
        \node[state, right of=i] (0) {$\ell_0$};
        \node[state, right of=0] (1) {$\ell_1$};
        \node[state, accepting,right of=1] (2) {$\ell_2$};

        \node[left of=i,yshift=1cm,xshift=1.4cm,fill=Gray,rounded corners=1mm](autname) {$A_1$};
        \node[below of=0,xshift=-1cm,yshift=-.8cm] (transitions) { %\fontsize{8}{9}\selectfont
            \begin{tabular}{c|c|c}
                ~Edge~ & Guard & Update  \\\hline
                $i$ & $\true$ & ~\texttt{x:=y:=z:=0}~ \\
                $t_0$ & $\true$ & \texttt{z:=0}  \\
                $t_1$ & \texttt{x==1} & \texttt{x:=0}  \\
                $t_2$ & ~\texttt{x-y>=1 and z<1}~ & - 
            \end{tabular}
        };
        \node[above of=transitions,rounded corners=1mm,fill=Gray,yshift=-.6cm](trans) {Discrete Instructions};

        \node[right of=transitions,xshift=4.4cm] (rates) { %\fontsize{8}{9}\selectfont
            \begin{tabular}{c|c}
                ~Location~ &  Rate \\\hline
                $\iota$ &  \texttt{dy/dt=1}\\
                $\ell_0$  & \texttt{dy/dt=1} \\
                $\ell_1$ & \texttt{dy/dt=0} \\
                $\ell_2$ &  \texttt{dy/dt=0}
            \end{tabular}
        };
        \node[above of=rates,rounded corners=1mm,fill=Gray,yshift=-.6cm](rates) {Continuous Instructions};

        \path[->]
        (i) edge node {$i$} (0)
        (0) edge node {$t_0$} (1)
        (1) edge[loop,in=120,out=60,distance=12mm,swap] node {$t_1$} (1)
        (1) edge node {$t_2$} (2)
      ;
    \end{tikzpicture}
    % \vspace*{-.7cm}
    \caption{Real-time program $P_1$: CFG $A_1$ of $P_1$ (top) with the accepting location $\ell_2$ and its instructions (bottom).}
    \vspace*{-.1cm}
    \label{fig-ex1}
\end{figure}
A sequence of program instructions  $w=a_0.a_1.\cdots.a_n \in\langof(A_1)$ defines a (possibly empty) set of \emph{timed words}, $\timed(w)$, of the form $(a_0,\delta_0).\cdots$ $(a_n,\delta_n)$ where $\delta_i \geq 0, i\in[0..n]$ is the time elapsed between two discrete transitions.
For instance, the timed words associated with $i.t_0.t_2$ are of the form $(i,\delta_0).(t_0,\delta_1).(t_2,\delta_2)$, for all $\delta_i \in \setR_{\geq 0}, i\in\{0,1,2\}$ such that the following constraints (predicates that define that each transition can be fired after $\delta_i$ time units) can be satisfied\footnote{We assume the program starts in $\iota$ and all the variables are initially zero.}:
\begin{align}
    & \underbrace{x_0 =  y_0 = z_0 = \delta_0 \wedge \delta_0 \geq 0}_{\text{ Time elapsing $\delta_0$ in $\iota$}} \quad \wedge  \underbrace{\true}_{\text{Guard of $i$}} &   \label{eq-1} \tag{$C_0$} \\
     & \underbrace{x_1 = y_1 = z_1 = 0 + \delta_1  \wedge \delta_1 \geq 0}_{\text{Update of $i$ and time elapsing $\delta_1$ in $\ell_0$}}  \quad \wedge   \underbrace{\true}_{\text{Guard of $t_0$}} &  \label{eq-2} \tag{$C_1$}  \\
     & \underbrace{x_2 = x_1 + \delta_2 \wedge y_2 = y_1 \wedge  z_2 = 0 + \delta_2 \wedge \delta_2 \geq 0}_{\text{Update of $t_0$ and time elapsing $\delta_2$ in $\ell_1$}} \ \wedge \ \underbrace{x_2  - y_2 \geq 1 \wedge z_2  < 1}_{\text{Guard of $t_2$}}    & \label{eq-3} \tag{$C_2$}
\end{align}
These constraints encode the following semantics: $i$ is taken after $\delta_0$ time units and at that time $x, y, z$ are equal to $\delta_0$ and hence
$x_0, y_0, z_0$ are the values of the variables when
location $\ell_0$ is entered.
The program remains in $\ell_0$ for $\delta_1$ time units.
When $t_0$ is taken after $\delta_1$ time units, the values of $x,y,z$ is given by $x_1, y_1, z_1$.
Finally the program remains $\delta_2$ time units in $\ell_1$ and $t_2$ is taken to reach $\ell_2$ which is the end of the program.
It follows that the program can execute $i.t_0.t_2$ (or in other words, $i.t_0.t_2$  is feasible) if and only if we can find $\delta_0, \delta_1, \delta_2$ such that $C_0 \wedge C_1 \wedge C_2$ is satisfiable. Hence the set of timed words associated with $i.t_0.t_2$ is not empty iff $C_0 \wedge C_1 \wedge C_2$ is satisfiable.

\paragraph{\bf \itshape Language Emptiness Problem.}
The \emph{timed language}, $\tlang(P_1)$, accepted by $P_1$ is the set of timed words associated with all the (untimed) words  $w$ accepted by $A_1$ \ie $\tlang(P_1) = \cup_{w \in \lang(A_1)} \tau(w)$.

The \emph{language emptiness problem} is a standard problem in Timed Automata theory~\cite{AlurDill} and is stated as follows for real-time programs: 
\begin{center}\bf
    given a real-time program $P$, is  $\tlang(P)$ empty?
\end{center}
It is known that the emptiness problem is decidable for some classes of real-time programs like Timed Automata~\cite{AlurDill}, but undecidable for more expressive classes like Stopwatch Automata~\cite{Henzinger199894}.
It is usually possible to compute symbolic representations of sets of
\emph{reachable} valuations after a sequence of transitions. However, to compute the
set of reachable valuations we may need to explore an arbitrary and unbounded number of sequences.
Hence only semi-algorithms exist to compute the set of reachable valuations.
For instance, using  \phaver to compute the set of reachable valuations for $P_1$ does not terminate (\tabref{tab-sym-comp}).
To force termination, we can compute an over-approximation of the set of reachable valuations.
Computing an over-approximation is sound (if we declare an over-approximation of a timed language to be empty the timed language is empty) but incomplete \ie
it may result in \emph{false positives} (we declare a timed language non empty whereas it is empty). This is witnessed by the  column ``\uppaal'' in \tabref{tab-sym-comp} where \uppaal over-approximates sets of valuations in the program $P_1$ using DBMs.
After $i.t_0$, the over-approximation is $0 \leq y \leq x \wedge 0 \leq z \leq x$ (this is the smallest DBMs that contains the actual set of valuations reachable after $i.t_0$). This over-approximation intersects the guard  $x - y \geq 1 \wedge z < 1$ of $t_2$ which enables $t_2$. Using this over-approximate set of valuations we would declare that $\ell_2$ is reachable in $P_1$ but this is an artifact of the over-approximation.\footnote{\uppaal terminates with the result ``the language \textbf{may} not be empty''.}
Neither \uppaal nor \phaver can prove that  $\tlang(P_1) = \varnothing$.

\begin{table}[thbtp]\centering
    % \vspace*{-.3cm}
    \begin{tabular}{||l||l|c||}\hline\hline
        ~Sequence~ & \multicolumn{1}{c|}{\phaver} & \multicolumn{1}{c||}{\uppaal}  \\\hline
        $i.t_0$ &  $z = x - y \wedge 0 \leq z \leq x$ & $0 \leq y \leq  x \wedge 0 \leq z \leq x$ \\\hline
        $i.t_0.t_1$ &  $z = x - y + 1 \wedge 0 \leq x \leq z \leq x + 1$ & $0 \leq z - x \leq 1 \wedge 0 \leq y$ \\\hline
        $i.t_0.(t_1)^2$ &  $z = x - y + 2 \wedge 0 \leq x   \leq z -1 \leq x + 1$ &  $1 \leq z - x \leq 2 \wedge 0 \leq y$ \\\hline
        $i.t_0.(t_1)^3$ &  $z = x - y + 3 \wedge 0 \leq x   \leq z -2 \leq x + 1$ & $2 \leq z - x \leq 3 \wedge 0 \leq y$ \\\hline
        \ldots & \ldots & \ldots \\\hline
        $i.t_0.(t_1)^k$ &  ~$z = x - y + k \wedge 0 \leq x   \leq z - k + 1 \leq x + 1$~ & ~$k - 1 \leq z - x \leq k \wedge 0 \leq y$~ \\\hline
        \ldots & \ldots & \ldots \\\hline\hline
    \end{tabular}
    \smallskip
    \caption{Symbolic representation of reachable states after a sequence of instructions.   \uppaal concludes that $\tlang(A_1) \neq \varnothing$  due to the over-approximation using DBMs. \phaver does not terminate.}
    \label{tab-sym-comp}
\end{table}
\paragraph{\bf \itshape Trace Abstraction Refinement for Real-Time Programs.}
The technique we introduce can discover \emph{arbitrary abstractions and invariants} that enable us to prove  $\tlang(P_1) = \varnothing$. Our method is a version of the \emph{trace abstraction refinement} (TAR) technique introduced in~\cite{traceref} and is depicted in \figref{fig-tar}.

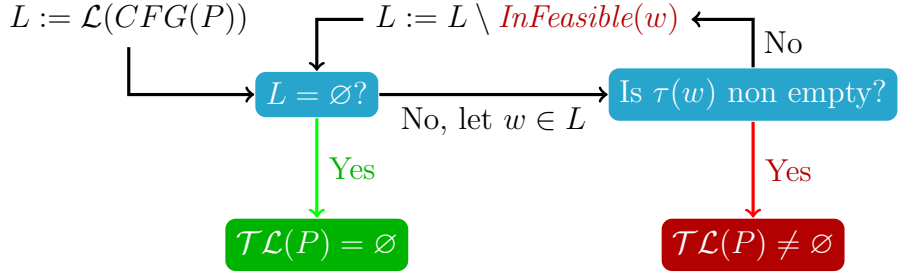
\begin{figure}[hbtp]\centering

    \begin{tikzpicture}[scale=2,node distance=2cm and 2cm, very thick, bend angle=20,bend angle=10]
    
        \node[module](0,0) (init) {$L = \varnothing$?};
        
        \node[module,right of=init, xshift=3.8cm] (checkfeas) {Is $\tau(w)$ non empty?};
       
         \node[below of=init,green] (nobug) { $\tlang(P) = \varnothing$};
       
         \node[below of=checkfeas,red] (bug) {$\tlang(P) \neq \varnothing$};
       
         \node[left of=init,xshift=-.5cm,yshift=1cm] (start) {$L := \lang(CFG(P))$};
       
       \path[->] (init) edge[draw=green] node {\green{Yes}} (nobug)
                        edge[swap] node[yshift=-.0cm] {No, let $w \in L$} (checkfeas)
                (checkfeas) edge[draw=red] node {\red{Yes}} (bug)
          ;
       
        \node[above of=init,yshift=-1cm,xshift=2.8cm] (refine) 
           {$L := L \setminus \red{\textit{InFeasible}(w)}$};
        \draw[->] (checkfeas) |- (refine.east);
        \draw[->] (refine.west) -| (init.north);
        \draw[->] (start) |- (init.west);

        \node[above of=checkfeas, yshift=-1.3cm, xshift=0.4cm] () {No};
       
       \end{tikzpicture}
       \caption{Trace Abstraction Refinement Loop for Real-Time Programs}
       \label{fig-tar}
\end{figure}

\smallskip 
\noindent Let us first introduce how the trace abstraction refinement algorithm (\figref{fig-tar}) operates on a real-time program $P$:
\begin{enumerate}
    \item the algorithm starts using the control flow graph of $P$, $CFG(P)$, and initially $L = \lang(CFG(P))$.

    \item if $L = \varnothing$ then $\tlang(P)$ is empty and the algorithm terminates (green block).

    \item otherwise, there is $w \in L$. We  check whether $\tau(w)$ is empty or not:
    
    \begin{itemize}
        \item If it is not empty then $\tlang(P)$ is not empty and the algorithm terminates (red block). 
        
        \item Otherwise, we can find\footnote{How this language is built is defined in Section~\ref{sec:tar}.} a regular language over the alphabet of $CFG(P)$, $\textit{InFeasible}(w)$, that satisfies: 1) $w \in \textit{InFeasible}(w)$ and 2) $\forall v \in \textit{InFeasible}(w)$, $\tlang(v) = \varnothing$. In the next iteration of the algorithm, we look for a candidate trace in $L \setminus \textit{InFeasible}(w)$, \ie we refine the trace abstraction $L$ by subtracting $\textit{InFeasible}(w)$ from it.
    \end{itemize}
    
\end{enumerate}
Assume the algorithm terminates after $k$ iterations.
In this case we were able to build a finite number of regular languages 
$L_1 = \textit{InFeasible}(w_1), 
L_2 = \textit{InFeasible}(w_2), \cdots, 
L_k = \textit{InFeasible}(w_k)$ such that 
$\forall 1 \leq i \leq k, \tlang(L_i) = \varnothing$.
If we terminate with $\tlang(P) = \varnothing$ then $\lang(CFG(P))  \subseteq \cup_{i = 1}^k L_i$.
 Otherwise if we terminate with $\tau(w_{k+1}) \neq \varnothing$ we found a witness trace $w_{k+1} \in \lang(CFG(P))  \setminus \cup_{i = 1}^k L_i$ such that $\tau(w_{k+1}) \neq \varnothing$ \ie a feasible timed trace.

\paragraph{\bf \itshape Example 1: Stopwatch Automaton.}

\noindent We illustrate the algorithm using our program $P_1$:

\begin{itemize}
    % \item the initial refinement $R$ is $R=\varnothing$;
    \item we initially let $L=\lang(CFG(P_1))$. Since $w_1= i.t_0.t_2 \in \lang(CFG(P_1))$ and thus $w_1\in L$ the check $L=\emptyset$ fails. 
    We therefore check whether $\timed(w_1) = \varnothing$ which can be done by encoding the  corresponding set of
     timed traces as described by Equations~\eqref{eq-1}--\eqref{eq-3} and then check whether 
     $C_0 \wedge C_1 \wedge C_2$ is satisfiable (\eg using an SMT-solver and the theory of Linear Real Arithmetic). $C_0 \wedge C_1 \wedge C_2$
     is not satisfiable and this establishes $\timed(w_1) = \varnothing$.
    %   and $R_1=\{i.t_0.t_2\}$ is a refinement.

    \item from the proof that $C_0 \wedge C_1 \wedge C_2$ is not satisfiable, we
     can obtain  an \emph{inductive interpolant} that comprises of two predicates $I_0, I_1$ -- one for each conjunction -- over the variables $x,y,z$.
     An example of an inductive interpolant\footnote{This is the pair returned by Z3 for $C_0 \wedge C_1 \wedge C_2$.} is $I_0 = x \leq y$ and  $I_1 =  x - y \leq z$.
     These predicates are \emph{invariants} of any timed word of the untimed word $w_1$, and can be used
     to annotate the sequence of transitions $w_1$ with pre- and post-conditions (Equation~\ref{eq-1a}), which are Hoare triples of the form $\{C\} \ a \ \{D\}$:
     \begin{equation}
         \{ \true \}  \quad i \quad \{I_0\} \quad t_0 \quad \{I_1\} \quad t_2 \quad \{\false\} \label{eq-1a} 
     \end{equation}
     A triple $\{C\} \ a \ \{D\}$ is \emph{valid} if whenever we start in a state $s$ satisfying $C$, and execute instruction $a$, the resulting new state $s'$ is in $D$.  $\{C\} \ a \ \{ \false\}$ means that no state exists after executing $a$ from $C$, \ie the trace $a$ is infeasible.
     The inductiveness of the interpolants is due to the fact that each triple 
     $\{C\} \ a \ \{D\}$ in the sequence \eqref{eq-1a} is a valid Hoare triple.
     Hoare triples (and validity) generalise to sequences of instructions $\sigma$ in the form $\{C\} \ \sigma \ \{D\}$.

     Because we can also prove that $\{I_1\}\ (t_1)^\ast\ \{I_1\}$ is a valid Hoare triple, if we combine this fact with Equation~\ref{eq-1a} we obtain a regular set of traces annotated with pre/post-conditions  as per Equation~\ref{eq-1b}.
     \begin{equation}
        \{ \true \} \quad i \quad \{I_0\} \quad t_0 \quad  \pmb{\{I_1\}} \quad \pmb{(t_1)^\ast} \quad \pmb{\{I_1\} } \quad t_2 \quad \{\false\} \label{eq-1b}
    \end{equation}
    This implies that the regular set of traces $i.t_0.(t_1)^\ast.t_2$ does not have any associated timed traces:
     for each word $w \in i.t_0.(t_1)^\ast.t_2$, $\tau(w) = \varnothing$ and as
      $\lang(CFG(P_1)) \subseteq i.t_0.(t_1)^\ast.t_2$ we can conclude that $\tlang(A_1) = \varnothing$.

\end{itemize}

\paragraph{\bf \itshape Example 2: Extended Timed Automaton.}
The following example (\figref{fig-ex2}) illustrates how extensions of timed automata
 with constraints that mix discrete and real variables can be analyzed.
The real-time program $P_2$ (\figref{fig-ex2}) is given by the CFG (left) and the instructions\footnote{The rates table is omitted as all the variables are clocks with rate $1$.} (right): it specifies a timed automaton with $2$ clocks $x,y$ (real variables) and one integer variable $i$. 
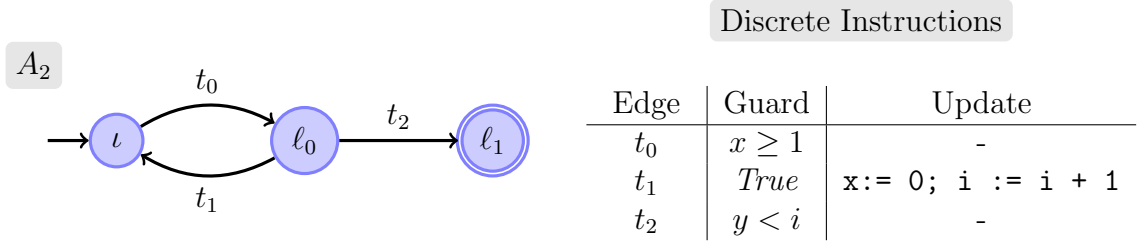
\begin{figure}[htbp]
    % \vspace*{-.2cm}
    \centering
    \begin{tikzpicture}[node distance=2.5cm and 1.5cm,very thick]
    %
        % \small
        \node[initial left,state,minimum size=0.7cm] (i) {$\iota$};
        \node[state, right of=i] (0) {$\ell_0$};
        \node[state, accepting,right of=0] (2) {$\ell_1$};

        \node[left of=i,yshift=1cm,xshift=1.4cm,fill=Gray,rounded corners=1mm](autname) {$A_2$};
        \node[right of=2,xshift=2.4cm,yshift=-0.3cm] (transitions) { %\fontsize{8}{9}\selectfont
            \begin{tabular}{c|c|c}
                ~Edge~ & Guard & Update  \\\hline
                $t_0$ & $x \geq 1$ & -  \\
                $t_1$ & $\true$ & \texttt{x:= 0; i := i + 1}  \\
                $t_2$ & $y < i$ & -  
            \end{tabular}
        };
        \node[above of=transitions,rounded corners=1mm,fill=Gray,yshift=-.6cm](trans) {Discrete Instructions};
        \path[->]
        (i) edge[bend left] node {$t_0$} (0)
        (0) edge[bend left] node {$t_1$} (i)
        (0) edge node {$t_2$} (1)
      ;
    \end{tikzpicture}
    % \vspace*{-.7cm}
    \caption{Real-time program $P_2$: CFG $A_2$ of $P_2$ (left) with accepting location $\ell_1$ and its instructions (right).}
    % \vspace*{-.1cm}
    \label{fig-ex2}
\end{figure}
This is an extended version of timed automata as the constraint $y < i$ mixes integer and real variables (clocks) and this is not permitted in the standard definition of timed automata.
Initially all the variables are set to $0$.
The objective is to prove that location $\ell_1$ is unreachable and thus that $\tlang(P_2) = \varnothing$.
Note that \uppaal does allow this specification but is unable to prove that $\ell_1$ is unreachable because $i$ is unbounded.

Our method is able to discover invariants that mix integer and real variables and can prove that $\ell_1$ is unreachable as follows:
\begin{enumerate}
    \item the first iteration of the TAR algorithm starts with $L = \lang(CFG(P_2))$.
    The check $L = \varnothing$ is negative as $w_1 = t_0.t_2 \in L$. However every timed word in $\tau(w_1)$ must satisfy the following constraints that correspond to taking $t_0$ and then $t_2$:
    \begin{align}
          & \underbrace{x_0 = y_0 =  \delta_0 \wedge \delta_0 \geq 0 \wedge i_0 = 0}_{\text{Time elapsing $\delta_0$ in $\iota$}} \, \wedge \underbrace{x_0 \geq 1}_{\text{Guard of $t_0$}} \label{eq2-1} & \tag{$C'_0$} \\
        & \underbrace{x_1 =  x_0 + \delta_1 \wedge y_1 = y_0 + \delta_1 \wedge i_1 = i_0 \wedge \delta_1 \geq 0}_{\text{Update of $t_0$ and time elapsing $\delta_1$ in $\ell_0$}}  \,  \wedge \underbrace{y_1 < i_1}_{\text{Guard of $t_2$}} &  \label{eq2-2} \tag{$C'_1$}
   \end{align}
   $C'_0 \wedge C'_1$ is not satisfiable and hence $\tlang(t_0.t_2) = \varnothing$ and thus we can safely remove $w_1$ from $L$.
   We can extract interpolants from the proof of unsatisfiability of $C'_0 \wedge C'_1$ and we establish the following sequence of valid Hoare triples:
   \begin{equation}
    \{ x = y = i = 0 \}  \quad t_0 \quad \{ x = y \wedge x \geq i \} \quad t_2 \quad \{\false\} \label{eq2-1a} 
\end{equation}

   \item  the second iteration of the TAR algorithm starts with an  updated $L = \lang(CFG(P_2)) \setminus \{ w_1\}$. 
   Again $L$  is not empty and for instance $w_2 = t_0.t_1.t_0.t_2$ is in $L$. The encoding for checking the emptiness of $\tau(w_2)$ is:
   \begin{align}
    &  \underbrace{x_0 = y_0 =  \delta_0 \wedge \delta_0 \geq 0 \wedge i_0 = 0}_{\text{Time elapsing $\delta_0$ in $\iota$}} \, \wedge \underbrace{x_0 \geq 1}_{\text{Guard of $t_0$}}   & \label{eq3-1} \tag{$C''_0$} \\
    & \underbrace{x_1 =  x_0 + \delta_1 \wedge y_1 = y_0 + \delta_1 \wedge i_1 = i_0  \wedge \delta_1 \geq 0}_{\text{Update of $t_0$ and time elspsing $\delta_1$ in $\ell_0$}} \wedge \underbrace{\true}_{\text{Guard of $t_1$}}  &  \label{eq3-2} \tag{$C''_1$}\\
    &  \underbrace{x_2 = 0 + \delta_2 \wedge y_2 = y_1 + \delta_2 \wedge  i_2 = i_1 + 1 \wedge \delta_2 \geq 0}_{\text{Time elapsing $\delta_2$ in $\iota$}} \wedge \underbrace{x_2 \geq 1}_{\text{Guard of $t_0$}}  &  \label{eq3-3} \tag{$C''_2$}\\
    & \underbrace{x_3 =  x_2 + \delta_3 \wedge y_3 = y_2 + \delta_3 \wedge  i_3 = i_2  \wedge \delta_3 \geq 0}_{\text{Time elapsing $\delta_3$ in $\ell_0$}} \wedge \underbrace{y_3 < i_3}_{\text{Guard of $t_2$}} &  \label{eq3-4} \tag{$C''_3$}
\end{align}
$C''_0 \wedge C''_1 \wedge C''_2 \wedge C''_3$ is unsatisfiable and hence  $\tlang(t_0.t_1.t_0.t_2) = \varnothing$. 
We can extract interpolants from the proof of unsatisfiability  and we establish the following sequence of valid Hoare triples.
   \begin{equation}
    \{ x = y = i = 0 \}  \quad t_0 \quad \{ y \geq i \} \quad t_1.t_0 \quad  \{ y \geq i \} \quad t_2 \quad \{\false\} \label{eq3-1a}
\end{equation}
As can be seen as $\{ y \geq i \} \ t_1.t_0 \  \{ y \geq i \}$ holds we can generalize this sequence to an arbitrary number of iterations of $t_0.t_1$:
\begin{equation}
    \{ x = y = i = 0 \}  \quad t_0 \quad \pmb{\{ y \geq i \}} \quad \pmb{(t_1.t_0)^+} \quad  \pmb{\{ y \geq i \}} \quad t_2 \quad \{\false\} \label{eq3-1b}
\end{equation}
which entails that $\tlang(t_0.(t_1.t_0)^+.t_2) = \varnothing$. This implies that we can remove $t_0.(t_1.t_0)^+.t_2$ from $L$.
\item observe that $L=\emptyset$ in the next iteration of TAR  as $\lang(CFG(P_2))\setminus( \{ t_0.t_2 \} \cup t_0.(t_1.t_0)^+.t_2)=\emptyset$ given that $\lang(CFG(P_2))=t_0.(t_1.t_0)^\ast.t_2$.
We have thus proved that $\tlang(P_2) = \varnothing$ as any word of instructions in $\lang(CFG(P_2))$ induces an infeasible trace and the algorithm terminates.
\end{enumerate}
In the rest of the paper, we provide a formal development of the methods we have introduced so far.

\section{Real-Time Programs}

Our approach is general enough and applicable to a wide range of timed systems
called \emph{real-time programs}.
  As an example, timed, stopwatch, hybrid automata and time Petri nets are special cases of real-time programs.

In this section we formally define \emph{real-time programs}. Real-time programs specify the control flow of \emph{instructions}, just as standard imperative programs do. The instructions can update \emph{variables} by assigning new values to them. Each instruction has a semantics and together with the control flow this precisely defines the semantics of real-time programs.

\subsection{Notations}
A finite automaton over an alphabet $\Sigma$ is a tuple $\automata=(\states,\iota,\alphabet,$ $\tfunc,\accept)$ where $Q$ is a finite set of locations s.t. $\iota \in \states$ is the initial location, $\alphabet$ is a finite alphabet of actions, $\tfunc \subseteq (\states \times \alphabet \times \states)$ is a finite transition relation, $\accept \subseteq Q$ is the set of \emph{accepting} locations.
A word $w = \lbl_0 . \lbl_1.  \cdots.  \lbl_n$ is a finite sequence of letters from $\Sigma$; we let $w[i] = \lbl_i$ be the $i$-th letter of $w$,
$|w|$ be the length of $w$ which is $n + 1$.
Let $\epsilon$ be the empty word and $|\epsilon|=0$, and let $\Sigma^\ast$ be the set of finite words over $\Sigma$. 
The \emph{language}, $\lang(\automata)$,  accepted by $\automata$ is defined in the usual manner as the set of words that can lead to $\accept$ from $\iota$.

Let $V$ be a finite set of real-valued variables. % called clocks.
A \emph{valuation} is a function $\val: \rvars \rightarrow \setR$.
The set of valuations is $[\rvars \rightarrow \setR]$.

We denote by $\bexpr(\rvars)$ the set of \emph{constraints} (or Boolean predicates) over $\rvars$ and
given $\varphi \in \bexpr(\rvars)$, we let $\vars(\varphi)$ be the set of unconstrained variables in $\varphi$.
Given a valuation, we let the truth value of a constraint (Boolean predicate) $\csystem$ be denoted by $\varphi(\val) \in \{ \true, \false\}$, 
and write $\val \models \csystem$ when  $\csystem(\val) = \true$ and let $\semof{\varphi} = \{ \nu \mid \val \models \csystem\}$.

An \emph{update} $\updt \subseteq [\rvars \rightarrow \setR] \times [\rvars \rightarrow \setR]$ is a binary relation over valuations.
Given an update $\updt$ and a set of valuations $\vals$, we let $\updt(\vals) = \{ \val' \mid \exists \val \in \vals \text{ and } (\val,\val') \in \updt \}$.
We let $\calU(\rvars)$ be the set of updates on the variables in $\rvars$.

Similar to the update relation, we define a \emph{rate} function $\rates:\rvars\rightarrow\setR$
 (rates can be negative), \ie a function from a variable to a real number\footnote{We can allow rates to be arbitrary terms but in this paper we restrict to deterministic rates or bounded intervals.}.
A rate is then a vector 
 $\rates\in \setR^\rvars$.
Given a valuation $\val$ and a timestep $\delta \in \setR_{\geq 0}$ the valuation $\val + (\rates,\delta)$ is defined by: $(\val + (\rates,\delta))(v) = \val(v) + \rates(v) \times \delta$ for $v \in \rvars$.

\subsection{Real-Time Instructions}
Let $\calI = \bexpr(\rvars) \times \calU(\rvars) \times \calR(\rvars)$ be a countable set of instructions -- and intentionally also the alphabet of the CFG. 
Each $\lbl \in \calI$ is a tuple $(\textit{guard, update, rates})$ denoted by $(\guard_\lbl, \updt_\lbl, \rates_\lbl)$.
Let $\val: \rvars \rightarrow \setR$ and $\val'  : \rvars \rightarrow \setR$ be two valuations.
For each pair $(\lbl,\delta) \in \calI \times \setR_{\geq 0}$
we define the following transition relation $\xrightarrow{~ \lbl, \delta ~}$:
\[
\val \xrightarrow{~ \lbl, \delta ~} \val' \iff
\begin{cases}
	1.\quad \val \models \guard_\lbl \text{ (guard of $\alpha$ is satisfied in $\nu$)}, \\
	2.\quad \exists \val'' \text{ s.t. } (\val, \val'') \in \updt_\lbl
	\text{ (discrete update allowed by $\alpha$) and } \\
	3. \quad \val' = \val'' + (\rates_\lbl,\delta) 
	\text{ (continuous update as defined by $\alpha$).}
\end{cases}
\]
The semantics of $\lbl \in \calI$ is a mapping $\fut{\lbl} : [V \rightarrow \setR] \rightarrow 2^{ [V \rightarrow \setR]}$ and for $\nu \in [V \rightarrow \setR]$
\begin{equation}
	\fut{\lbl}(\val) =  \{ \val' \, | \, \exists \delta \geq 0, \val\xrightarrow{~ \lbl, \delta ~} \val' \}\mathpunct. 
   \end{equation}
It follows that:
\begin{fact}\label{fact-1}
	$\exists \delta \geq 0, \val \xrightarrow{~ \lbl, \delta ~} \val' \iff \nu' \in \fut{\lbl}(\val)$. 
\end{fact}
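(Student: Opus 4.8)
The plan is to observe that Fact~\ref{fact-1} is an immediate consequence of the definition of the semantic map $\fut{\lbl}$, so the ``proof'' is really just an unfolding of notation. Recall that $\fut{\lbl}$ was defined by $\fut{\lbl}(\val) = \{\, \val' \mid \exists \delta \geq 0,\ \val \xrightarrow{~\lbl,\delta~} \val' \,\}$. By the meaning of set-builder notation, membership $\nu' \in \fut{\lbl}(\val)$ holds exactly when $\nu'$ satisfies the defining predicate, i.e.\ exactly when $\exists \delta \geq 0,\ \val \xrightarrow{~\lbl,\delta~} \val'$. This establishes both directions of the stated equivalence simultaneously.

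If one wants a slightly more explicit argument, I would spell out the two implications. For ($\Rightarrow$): assume there is some $\delta \geq 0$ with $\val \xrightarrow{~\lbl,\delta~} \val'$; then $\val'$ is a witness of the comprehension defining $\fut{\lbl}(\val)$, hence $\val' \in \fut{\lbl}(\val)$. For ($\Leftarrow$): assume $\val' \in \fut{\lbl}(\val)$; by definition of $\fut{\lbl}(\val)$ there exists $\delta \geq 0$ such that $\val \xrightarrow{~\lbl,\delta~} \val'$, which is precisely the left-hand side. Optionally, one can further expand $\val \xrightarrow{~\lbl,\delta~} \val'$ into its three defining clauses ($\val \models \guard_\lbl$; $\exists \val''$ with $(\val,\val'') \in \updt_\lbl$; and $\val' = \val'' + (\rates_\lbl,\delta)$) to connect the statement directly to the concrete transition relation, but this is not needed for the equivalence itself.

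There is essentially no obstacle here: the statement is definitional, and the only ``work'' is making sure the existential quantifier on $\delta$ in the transition relation and the one implicit in the set-builder definition of $\fut{\lbl}(\val)$ are recognised as the same quantifier. I would therefore keep the proof to one or two sentences and move on.
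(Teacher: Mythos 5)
Your proposal is correct and matches the paper, which states this Fact without proof precisely because it is an immediate unfolding of the set-builder definition of $\fut{\lbl}(\val)$. Your one-line argument (recognising the two existential quantifiers over $\delta$ as the same) is all that is needed.
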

This mapping can be straightforwardly extended to sets of valuations $K \subseteq [V \rightarrow \setR]$ as follows:
\begin{equation}
 \fut{\lbl}(K) =  \underset{\val \in K}{\bigcup} \fut{\lbl}(\val)\mathpunct.
\end{equation}

\subsection{Post Operator}
Let $K$ be a set of valuations and $w \in \calI^\ast$.
 We inductively define the \emph{(strongest) post operator} $\post(K,w)$ as follows:
 \begin{align*}
	 \post(K, \epsilon) & = K \\
	 \post(K, \lbl.w) & = \post(\fut{\lbl}(K),w)
 \end{align*}
The post operator extends to logical constraints $\varphi \in \bexpr(\rvars)$ by defining $\post(\varphi,w) = \post(\semof{\varphi},w)$.
 In the sequel, we assume that, when $\varphi \in \bexpr(\rvars)$,
 then $\fut{\lbl}(\semof{\varphi})$ is also definable as a constraint in $\bexpr(\rvars)$. This inductively implies that $\post(\varphi,w)$ can also be expressed as a constraint in $\bexpr(\rvars)$ for sequences of instructions $w \in \calI^\ast$.

\subsection{Timed Words and Feasible Words}
A \emph{timed word} (over alphabet $\calI$) is a fini\-te sequence  $\word=(\lbl_0,\delta_0).(\lbl_1,\delta_1). \cdots.(\lbl_n,\delta_n)$ such that for each $0 \leq i \leq n$, $\delta_i \in \setR_{\geq 0}$ and
$\lbl_i \in \calI$.
The timed word $\word$ is \emph{feasible}  if and only if there exists a set of  valuations $\{\val_0,\dots,\val_{n+1}\} \subseteq [\rvars \rightarrow \setR]$ such that:
	\[
		\val_0 \xrightarrow{~\lbl_0, \delta_0 ~} \val_1 \xrightarrow{~\lbl_1, \delta_1 ~} \val_2 \quad \cdots \quad \val_n \xrightarrow{~\lbl_n, \delta_n ~} \val_{n+1} \mathpunct.
	\]
We let $\unt(\word) = \lbl_0.\lbl_1.\cdots.\lbl_n$ be the \emph{untimed} version of $\word$. We extend the notion \emph{feasible} to an untimed word $w \in \calI^\ast$: $w$  is feasible iff
$w = \unt(\word)$ for some feasible timed word $\word$.

\begin{lemma}\label{lemma-2}
	An untimed word $w \in \calI^\ast$ is \emph{feasible}  iff $\post(\true,w) \neq \false$.
\end{lemma}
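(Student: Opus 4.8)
The statement to prove is: an untimed word $w \in \calI^\ast$ is feasible iff $\post(\true,w) \neq \false$. The plan is to prove a slightly stronger statement by induction on the length of $w$, namely that for any set of valuations $K$, $\post(K,w)$ is exactly the set of valuations $\val'$ reachable from some $\val \in K$ by a feasible run labelled $w$ (with some choice of delays). Unfolding the definitions of $\post$ and of feasibility of a timed word, this will yield the lemma by instantiating $K = \semof{\true}$ and observing that $\post(\true,w) \neq \false$ is by definition the assertion that $\semof{\post(\true,w)} = \post(\semof{\true},w)$ is non-empty.

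First I would set up the precise bridging statement: for all $w \in \calI^\ast$ and all sets of valuations $K$,
\[
  \post(K,w) = \{\, \val_{n+1} \mid \exists \val_0 \in K,\ \exists \delta_0,\dots,\delta_n \geq 0,\ \val_0 \xrightarrow{\lbl_0,\delta_0} \val_1 \cdots \val_n \xrightarrow{\lbl_n,\delta_n} \val_{n+1} \,\},
\]
where $w = \lbl_0.\cdots.\lbl_n$. The base case $w = \epsilon$ is immediate since $\post(K,\epsilon) = K$ and the empty run starts and ends in $K$. For the inductive step, write $w = \lbl.w'$; by definition $\post(K,\lbl.w') = \post(\fut{\lbl}(K), w')$, and by Fact~\ref{fact-1}, $\fut{\lbl}(K) = \{\val' \mid \exists \val \in K,\ \exists \delta \geq 0,\ \val \xrightarrow{\lbl,\delta} \val'\}$ is exactly the set of valuations reachable from $K$ in one feasible $\lbl$-step. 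Applying the induction hypothesis to $w'$ with the set $\fut{\lbl}(K)$ and concatenating the single $\lbl$-step with the length-$|w'|$ run gives the claim for $w$.

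Finally I would specialise: $\post(\true,w) = \post(\semof{\true},w)$ and $\semof{\true}$ is the set of all valuations, so the right-hand side of the bridging equation is non-empty iff there exist $\val_0,\dots,\val_{n+1}$ and delays $\delta_0,\dots,\delta_n \geq 0$ with $\val_0 \xrightarrow{\lbl_0,\delta_0}\cdots\xrightarrow{\lbl_n,\delta_n}\val_{n+1}$, i.e.\ iff the timed word $(\lbl_0,\delta_0).\cdots.(\lbl_n,\delta_n)$ is feasible for some choice of delays, i.e.\ iff $w = \unt(\word)$ for some feasible $\word$, which is precisely the definition of $w$ being feasible. Since $\post(\true,w) \neq \false$ means exactly that $\semof{\post(\true,w)}$ is non-empty, this closes the equivalence.

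The only delicate point — and the one worth stating carefully rather than glossing over — is the standing assumption made just before the lemma, that $\fut{\lbl}(\semof{\varphi})$ is again expressible as a constraint in $\bexpr(\rvars)$, so that $\post(\true,w)$ is a well-defined element of $\bexpr(\rvars)$ and the notation $\post(\true,w) \neq \false$ makes sense (with $\false$ denoting the constraint whose semantics is $\emptyset$). Given that assumption everything else is a routine unwinding of definitions; the induction carries no real obstacle, so the main care is simply in aligning the "exists a set of valuations $\{\val_0,\dots,\val_{n+1}\}$" formulation of feasibility with the iterated-$\post$ formulation.
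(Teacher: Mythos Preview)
Your proposal is correct and follows essentially the same approach as the paper: both prove by induction on $|w|$ that $\post$ computes precisely the set of valuations reachable by some feasible timed run labelled $w$, then specialise to conclude the lemma. Your version is in fact a bit cleaner, since you induct by peeling the first letter (matching the recursive definition $\post(K,\lbl.w)=\post(\fut{\lbl}(K),w)$) and you state the bridging claim for arbitrary $K$, whereas the paper works with singletons and appends at the end; you also make explicit the specialisation to $K=\semof{\true}$ and the role of the standing expressibility assumption, which the paper leaves implicit.
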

\begin{proof}
We prove this Lemma by induction on the length of $w$.
The induction hypothesis is: 
\[
	\val_0 \xrightarrow{~\lbl_0, \delta_0 ~} \val_1 \xrightarrow{~\lbl_1, \delta_1 ~} \val_2 \quad \cdots \quad \val_n \xrightarrow{~\lbl_n, \delta_n ~} \val_{n+1} \iff \nu_{n+1} \in \post(\{\nu_0\},\lbl_0.\lbl_1.\cdots.\lbl_n)
\]
which is enough to prove the Lemma.

\noindent{\it Base step.} If $w = \epsilon$, then $\post(\{ \nu_0 \},\epsilon) = \{\nu_0\}$.

\noindent{\it Inductive step.} 
Assume $\val_0 \xrightarrow{~\lbl_0, \delta_0 ~} \val_1 \xrightarrow{~\lbl_1, \delta_1 ~} \val_2 \quad \cdots \quad \val_n \xrightarrow{~\lbl_n, \delta_n ~} \val_{n+1} \xrightarrow{~\lbl_{n+1}, \delta_{n+1} } \val_{n+2}$.
By induction hypothesis, $\val_{n+1} \in \post(\{\nu_0\}, \lbl_0.\lbl_1.\cdots.\lbl_n)$, and $\val_{n+2} \in \fut{\lbl_{n+1}}(\val_{n+1})$.
By definition of $\post$ this implies that $\val_{n+2} \in \post(\{\val_0\}, \lbl_0.\lbl_1.\cdots.\lbl_n.\lbl_{n+1})$.
\qed
\end{proof}

\smallskip

\subsection{Real-Time Programs}

The specification of a real-time program decouples the \emph{control} (\eg for Timed Automata, the locations) and the \emph{data} (the clocks or integer variables).
A \emph{real-time program} is a pair $\prog=(A_\prog,\semof{\cdot})$ where $A_\prog$ is a finite automaton $A_\prog=(Q,\iota,\calI, \Delta, F)$ over the alphabet\footnote{$\calI$ can be infinite but we require the control-flow graph $\Delta$ (transition relation) of $A_\prog$ to be finite.} $\calI$, $\Delta$ defines the control-flow graph of the program and $\semof{\cdot}$ provides the semantics of each instruction.

\noindent A timed word $\word$ is \emph{accepted} by  $\prog$ if and only if:
\begin{enumerate}
	\item $\unt(\word)$ is accepted by $A_\prog$ and, % ($\unt(\word) \in \lang(A_\prog)$)  and
	\item $\word$ is feasible.
\end{enumerate}
The \emph{timed language}, $\tlang(\prog)$,  of a real-time program $\prog$ is the set of timed words accepted by $\prog$, \ie $\word \in \tlang(\prog)$ if and only if $\unt(\word) \in \lang(A_\prog)$ and $\word$ is feasible.
\begin{remark}
	 We do not assume any particular values initially for the variables of a real-time program (the variables that appear in $I$).
	 This is reflected by the definition of \emph{feasibility} that only requires the existence of valuations without containing the initial one $\nu_0$.
	When specifying a real-time program, initial values can be explicitly set by regular instructions at the beginning of the program.
	This is similar to standard programs where the first instructions can set the values of some variables.
\end{remark}

\subsection{Timed Language Emptiness Problem}
The \emph{(timed) language emptiness prob\-lem} asks the following:
\begin{quote}
	Given a real-time program $\prog$, is $\tlang(\prog)$ empty?
\end{quote}
\begin{theorem}
	$\tlang(\prog) \neq \varnothing$ iff
    $\exists w \in \lang(A_\prog)$ such that $\post(\true,w) \not\subseteq \false$.
\end{theorem}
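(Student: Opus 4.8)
The plan is to unfold the two definitions involved and reduce the claim to Lemma~\ref{lemma-2}.

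\medskip
\noindent\textbf{Approach.}
Recall that, by definition, $\tlang(\prog) \neq \varnothing$ means there exists a timed word $\word$ with $\unt(\word) \in \lang(A_\prog)$ and $\word$ feasible. The key observation is that feasibility of $\word$ depends only on the untimed word $\unt(\word)$ in the following weak sense: if $\word$ is feasible then $\unt(\word)$ is feasible (this is immediate from the definition of \emph{feasible} extended to untimed words), and conversely if an untimed word $w$ is feasible then by definition $w = \unt(\word)$ for some feasible timed word $\word$. So I would first show the equivalence
\[
\tlang(\prog) \neq \varnothing \iff \exists w \in \lang(A_\prog) \text{ such that } w \text{ is feasible.}
\]
The forward direction: take $\word \in \tlang(\prog)$, set $w = \unt(\word)$; then $w \in \lang(A_\prog)$ by item~1 of acceptance, and $w$ is feasible by item~2 together with the definition of feasibility for untimed words. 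The backward direction: given feasible $w \in \lang(A_\prog)$, pick a feasible timed word $\word$ with $\unt(\word) = w$; then $\word$ is accepted by $\prog$, so $\tlang(\prog) \neq \varnothing$.

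\medskip
\noindent\textbf{Finishing with Lemma~\ref{lemma-2}.}
Once the problem is reduced to the existence of a feasible $w \in \lang(A_\prog)$, I invoke Lemma~\ref{lemma-2}, which states exactly that $w$ is feasible iff $\post(\true,w) \neq \false$. Here one must reconcile the notation: the theorem writes $\post(\true,w) \not\subseteq \false$, which I read as $\semof{\post(\true,w)} \neq \varnothing$ (equivalently $\post(\true,w)$ is not the unsatisfiable constraint $\false$), matching the statement of Lemma~\ref{lemma-2}. Substituting this characterization into the reduced statement yields
\[
\tlang(\prog) \neq \varnothing \iff \exists w \in \lang(A_\prog) \text{ such that } \post(\true,w) \neq \false,
\]
which is the claim.

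\medskip
\noindent\textbf{Main obstacle.}
There is no deep obstacle here — the theorem is essentially a repackaging of Lemma~\ref{lemma-2} through the definitions of real-time program acceptance and timed language. The only point requiring a little care is the Remark's convention that no initial valuation $\nu_0$ is fixed: feasibility quantifies existentially over all valuations $\nu_0,\dots,\nu_{n+1}$, and correspondingly $\post(\true, w)$ starts from the set of \emph{all} valuations $\semof{\true}$ rather than a singleton. One should check that the induction in Lemma~\ref{lemma-2}, though phrased with a singleton $\{\nu_0\}$, gives the needed statement after taking the union over all choices of $\nu_0$; i.e.\ $\post(\true, w) = \bigcup_{\nu_0} \post(\{\nu_0\}, w)$ is nonempty iff some starting $\nu_0$ admits a full run, which is precisely feasibility of $w$. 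This bookkeeping between ``some initial valuation exists'' and ``the post from $\true$ is nonempty'' is the only step worth spelling out explicitly.
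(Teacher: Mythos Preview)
Your proposal is correct and follows essentially the same route as the paper: reduce non-emptiness of $\tlang(\prog)$ to the existence of a feasible untimed word $w \in \lang(A_\prog)$ via the definition of acceptance, then invoke Lemma~\ref{lemma-2}. Your extra remarks on reconciling $\not\subseteq \false$ with $\neq \false$ and on passing from the singleton $\{\nu_0\}$ in Lemma~\ref{lemma-2} to $\post(\true,w)$ are careful bookkeeping that the paper leaves implicit but are entirely in line with its argument.
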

\begin{proof}
	$\tlang(\prog) \neq \varnothing$ iff there exists a feasible timed word $\word$ such that $\unt(\word)$ is accepted by $A_\prog$. This is equivalent to
	the existence of a feasible word $w \in \lang(A_\prog)$, and
	 by Lemma~\ref{lemma-2}, feasibility of $w$ is equivalent to $\post(\true,w) \not\subseteq \false$.\qed
\end{proof}

\subsection{Useful Classes of Real-Time Programs}

\emph{Timed Automata} are a special case of real-time programs.
The variables are called clocks.
$\bexpr(\rvars)$ is restricted to constraints on individual clocks or \emph{difference constraints} generated by the grammar:
\begin{equation}\label{eq-grammar-ta}
b_1, b_2::= \true \mid \false \mid x - y \Join k \mid x \Join k \mid b_1\wedge b_2
\end{equation}
where $x, y\in V$, $k \in \setQ_{\geq 0}$ and $\Join\,\in\{<,\leq,=,\geq,>\}$\footnote{
While difference constraints are strictly disallowed in most definitions of Timed Automata, the method we propose
retain its properties regardless of their presence.}.
We note that wlog. we omit \emph{location invariants} as for the language emptiness problem, these can be implemented as guards.
An update in $\mu \in \calU(\rvars)$ is defined by a set of clocks to be \emph{reset}.
Each pair $(\nu,\nu') \in \mu$ is such that $\nu'(x)=\nu(x)$ or $\nu'(x)=0$ for each $x \in \rvars$.
The valid rates are fixed to 1, and thus $\calR(\rvars) = \{1\}^\rvars$. 

\smallskip
\emph{Stopwatch Automata} can also be defined as a special case of real-time programs.
As defined in~\cite{stopwatches}, Stopwatch Automata
are Timed Automata extended with \emph{stopwatches} which are clocks that can be stopped. $\bexpr(\rvars)$ and $\calU(\rvars)$ are the same as for Timed Automata but the set of valid rates is defined by the functions of the form $\calR(\rvars) = \{0,1\}^\rvars$ (the clock rates can be either $0$ or $1$). An example of a Stopwatch Automaton is given by the timed system $\automata_1$ in \figref{fig-ex1}.

As there exists syntactic translations (preserving timed languages or reachability) that map hybrid automata to stopwatch automata~\cite{stopwatches}, and translations that map time Petri nets~\cite{DBLP:conf/formats/BerardCHLR05,cassez-jss-06} and extensions~\cite{tpn-13,BJJJMS:TCS:13} thereof to timed automata, it follows that time Petri nets and hybrid automata are also special cases of real-time programs.
This shows that the method we present in the next section is applicable to a wide range of timed systems.
 What is remarkable as well, is that it is not restricted to timed systems that have a finite number of discrete states but can also accommodate infinite discrete state spaces. For example, the real-time program $P_2$ in \figref{fig-ex2}, page~\pageref{fig-ex2} has two clocks $x$ and $y$ and an unbounded integer variable $i$.
Even though $i$ is unbounded, our technique discovers the loop invariant $y \geq i$ of the $\iota$ and $\ell_0$ locations --
an invariant is over a real-time clock $y$ and the integer variable $i$.
 It allows us to prove that $\tlang(P_2) = \varnothing$ as the guard of $t_2$ never can be satisfied ($y<i$).

\section{Trace Abstraction Refinement for Real-Time Programs}\label{sec:tar}
In this section we give a formal description of a semi-algorithm to solve the language emptiness problem for real-time programs. The semi-algorithm is a version of the \emph{refinement of trace abstractions} (TAR) approach~\cite{traceref} for timed systems.

\subsection{Refinement of Trace Abstraction for Real-Time Programs}
We have already introduced our algorithm in~\figref{fig-tar}, page~\pageref{fig-tar}.
We now give a precise formulation of the TAR semi-algorithm for real-time programs,
in~\algref{algo-1}.
It is essentially the same as the  semi-algorithm as introduced in~\cite{traceref} --  we  therefore omit theorems of completeness and soundness as these will be equivalent to the theorems in~\cite{traceref} and are proved in the exact same manner.

\begin{algorithm}
    % [th,ruled,linesnumbered,linewidth=.97\textwidth]
    \small
    \SetAlFnt{\small}
    \SetAlFnt{\small}
    \SetAlCapFnt{\small}
    \SetAlCapNameFnt{\small}
    \SetAlgoVlined
    \SetNoFillComment
    
    \SetKwInOut{Input}{Input}
    \SetKwInOut{Output}{Result}
    \SetKwInOut{Data}{Var}
    
    \newcommand{\myrcomment}[1]{\hfill \small \textcolor{gray}{\textit{/* #1 */}}\par}
    \newcommand{\mycomment}[1]{\small \textcolor{gray}{\textit{/* #1 */}} \hfill\par}
    \Input{A real-time program $\prog=(A_\prog, \semof{\cdot})$.} 
    \Output{$(\true, -)$ if $\tlang(\prog) = \varnothing$, and otherwise $(\false, w)$ if $\tlang(\prog)\neq \varnothing$ with $w \in \lang(A_P)$ and $\post(\true,w) \not\subseteq \false$ -- or non-termination.}
    \Data{$R$: a regular language, initially $R = \varnothing$.\\
      $w$: a word in $ \lang(A_P)$, initially $w = \epsilon$.\\
      $T$: A finite automaton, initially empty. 
    }
    \While{$\lang(A_P)  \not\subseteq R$}{
    % \tcc{Th)
    %   \setminus \cup_{i=1}^n \lang(A_i)$}
      Let $w \in \lang(A_P) \setminus R$;\\
      \eIf{
        $\post(\true,w) \not\subseteq \false $}{
        \tcc{$w$ is feasible and $w$ is a counter-example}
        \Return $(\false, w)$\;  %$\false$\;
      }{%
        \tcc{$w$ is infeasible, compute an interpolant automaton based on $w$}
        Let $T = \textit{ITA}(w)$\;
        \tcc{Add $T$ to refinement and continue}
        Let $R := R \cup \lang(T)$\;
      }  
    }
    \Return $(\true, - )$\;
    \caption{RTTAR -- Trace Abstraction Refinement for Real-Time Programs}
    \label{algo-1}
    \end{algorithm}

The input to the semi-algorithm \emph{TAR-RT} is a real-time program $\prog=(A_\prog, \semof{\cdot})$.
An invariant of the semi-algorithm is that the refinement $R$, which is subtracted to the initial set of traces, is either empty or containing infeasible traces only.
In the coarsets, initial abstraction, all the words $\lang(A_\prog)$ are potentially feasible. 
In each iteration of the algorithm, we then chip away infeasible behaviour (via the set $R$) of $A_\prog$, making the set difference $\lang(A_\prog)\setminus R$ move closer to the set of feasible traces, thereby shrinking the overapproximation of feasible traces ($\lang(A_\prog)\setminus R$). 

Initially the refinement $R$ is the empty set.
The semi-algorithm works as follows:
\begin{description}
    \item[Step~1] line~1, check whether all the (untimed) traces in $\lang(A_\prog)$ are in $R$. If this is the case, $\tlang(P)$ is empty and the semi-algorithm terminates (line~8). Otherwise (line~2), there is a sequence $w \in \lang(A_\prog) \setminus R$, goto Step~2;

    \item[Step~2] if $w$ is feasible (line~3) \ie there is a feasible timed word $\word$ such that $\unt(\word)=w$, then $\word \in \tlang(P)$ and $\tlang(P) \neq \varnothing$ and the semi-algorithm terminates (line~4).
    Otherwise $w$ is not feasible, goto Step~3;

    \item[Step~3] $w$ is infeasible and given the reason for infeasibility we can construct (line~6) a finite \emph{interpolant automaton}, $\ia(w)$, that accepts $w$ and other words that are infeasible for the same reason. How $\ia(w)$ is computed is addressed in the sequel.
    The automaton $\ia(w)$ is added (line~7) to the previous refinement $R$ and the semi-algorithm starts a new round at Step~1 (line~1).

\end{description}
In the next paragraphs we explain the main steps of the algorithms: how to check feasibility of a sequence of instructions and how to build $\ia(w)$.

\subsection{Checking Feasibility}
Given a arbitrary word  $w \in \calI^\ast$, we can check whether $w$ is feasible by encoding the side-effects of each instruction in  $w$ using linear arithmetic as demonstrated in Examples~1 and~2.

We now define a function $\enc$ for constructing such a constraint-system characterizing the feasibility of a given trace.
We first show how to encode the side-effects and feasibility of a single instruction
$\alpha \in \calI$.
Recall that $\alpha=(\gamma, \mu, \rho)$ where the three components are respectively the guard, the update, and the rates.
Assume that the variables\footnote{The union of the variables in $\gamma, \mu, \rho$.} in $\alpha$ are $X = \{x_1, x_2, \cdots, x_k\}$.
We can define the semantics of $\alpha$ using the standard unprimed\footnote{$\overline{x}$ denotes the vector of variables $\{x_1, x_2, \cdots, x_k\}$.} and primed variables ($X'$).
We assume that the guard and the updates can be defined by predicates and write  $\alpha=(\csystem(\overline{x}),\updt(\overline{x}, \overline{x}'),\rates(\overline{x}))$ with:
\begin{itemize}
    \item $\csystem(\overline{x}) \in \bexpr(X)$ is the guard of the instruction,
    \item $\updt(\overline{x}, \overline{x}')$ a set of constraints in $\bexpr(X \cup X')$,
    \item $\rates: X \rightarrow \setQ$ defines the rates of the variables.
\end{itemize}
The effect of $\alpha$ from a valuation  $\overline{x}''$, which is composed of 1) discrete step if the guard is true followed by the updates leading to a new valuation $\overline{x}'$, and 2) continuous step \ie time elapsing $\delta$, leading to a new valuation $\overline{x}$, can be encoded as follows: 
\begin{equation}\label{eq-enco-inst}
    \enc(\alpha, \overline{x}'', \overline{x}', \overline{x}, \delta)   =   \csystem(\overline{x}'') \wedge \updt(\overline{x}'', \overline{x}') \wedge \overline{x} = \overline{x}' + (\rates, \delta) \wedge \delta \geq 0  
\end{equation}
Let $K(\overline{x})$  be a set of valuations that can be defined as constraint in $\bexpr(X)$.
It follows that $\fut{\alpha}(K(\overline{x}))$ is defined by:
\begin{equation}
    \exists \delta , \overline{x}'',\overline{x}' \text{ such that } K(\overline{x}'') \wedge \enc(\alpha, \overline{x}'', \overline{x}', \overline{x}, \delta)  \label{eq-post-enc}
\end{equation}
In other terms, $\fut{\alpha}(K(\overline{x}))$ is not empty iff $K(\overline{x}'') \wedge \enc(\alpha, \overline{x}'', \overline{x}', \overline{x}, \delta)$  is \emph{satisfiable}.

\smallskip

We can now define the encoding of a sequence of instructions $w = \lbl_0.\lbl_1. \cdots . \lbl_n \in \calI^\ast$.
Given a set of variables $W$, we define the corresponding set of super-scripted variables $W^k = \{ w^j, w \in W, 0 \leq j \leq k\}$. 
Instead of using $x, x', x''$ we use super-scripted variables $\overline{x}^k$ (and $\overline{y}^k$ for the intermediate variables $x'$) to encode the side-effect of each instruction in the trace:

\[
   \enc(w) =  \bigwedge_{i = 0}^n \enc(\lbl_i, \overline{x}^i, \overline{y}^i, \overline{x}^{i+1}, \delta^i) 
\]

It is straighgforward to prove that the function
$\enc:\calI^\ast\rightarrow\bexpr(X^{n+1} \cup  Y^n \cup \{ \delta \}^n )$ constructs a constraint-system
characterizing exactly the feasibility of a word $w$:
\begin{fact}\label{lem:encoding}
For each $ w \in \calI^\ast$,  $\post(\true,w) \not\subseteq \false$ iff $\enc(w)$ is satisfiable.
\end{fact}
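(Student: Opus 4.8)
The plan is to reduce the statement to two already-established ingredients: Lemma~\ref{lemma-2}, which says $w$ is feasible iff $\post(\true,w)\not\subseteq\false$, and the single-instruction characterization in Equations~\eqref{eq-enco-inst}--\eqref{eq-post-enc}, which says that $\fut{\lbl}(K(\overline x))$ is nonempty iff $K(\overline x'')\wedge\enc(\lbl,\overline x'',\overline x',\overline x,\delta)$ is satisfiable. So it suffices to show, by induction on $|w|$, that $\post(\true,w)$ is exactly the set of valuations $\overline x^{\,n+1}$ for which $\enc(w)$ (with the remaining variables existentially quantified) holds; nonemptiness of that set is then literally satisfiability of $\enc(w)$, and the Fact follows from Lemma~\ref{lemma-2}.

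First I would set up the right invariant for the induction. Writing $w=\lbl_0.\lbl_1.\cdots.\lbl_n$, the claim to be proved by induction on $n$ is
\[
  \post(\true,w) \;=\; \bigl\{\,\overline x^{\,n+1}\ \bigm|\ \exists \overline x^{\,0},\dots,\overline x^{\,n},\ \overline y^{\,0},\dots,\overline y^{\,n},\ \delta^0,\dots,\delta^n:\ \enc(w)\,\bigr\}\mathpunct.
\]
\textbf{Base case.} For $n=0$, $\enc(w)=\enc(\lbl_0,\overline x^{\,0},\overline y^{\,0},\overline x^{\,1},\delta^0)$, and projecting onto $\overline x^{\,1}$ gives exactly $\fut{\lbl_0}(\semof{\true})=\post(\true,\lbl_0)$ by Equation~\eqref{eq-post-enc} with $K=\true$. (If one prefers $w=\epsilon$ as the base case, then $\enc(\epsilon)$ is the empty conjunction, which is $\true$, and $\post(\true,\epsilon)=\semof{\true}$, the set of all valuations; both sides agree.)
\textbf{Inductive step.} Suppose the claim holds for $w'=\lbl_0.\cdots.\lbl_{n-1}$. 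By definition of $\post$, $\post(\true,w)=\fut{\lbl_n}(\post(\true,w'))$. Let $K=\post(\true,w')$, which by the induction hypothesis is the projection onto $\overline x^{\,n}$ of the constraint $\enc(w')$ over variables $X^{n}\cup Y^{n-1}\cup\{\delta\}^{n-1}$. Applying Equation~\eqref{eq-post-enc} with this $K$, $\fut{\lbl_n}(K)$ is the projection onto $\overline x^{\,n+1}$ of $K(\overline x^{\,n})\wedge\enc(\lbl_n,\overline x^{\,n},\overline y^{\,n},\overline x^{\,n+1},\delta^n)$. Substituting the inductive description of $K$ and observing that $\enc(w)=\enc(w')\wedge\enc(\lbl_n,\overline x^{\,n},\overline y^{\,n},\overline x^{\,n+1},\delta^n)$ with disjoint fresh super-scripted variables, the two existential blocks merge into a single existential quantification over $X^{n}\cup Y^{n}\cup\{\delta\}^{n}$, yielding exactly the claimed description of $\post(\true,w)$.

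Finally I would close the argument: $\post(\true,w)\not\subseteq\false$ means $\post(\true,w)\neq\varnothing$, which by the invariant just proved holds iff $\enc(w)$ (read as a closed formula after existentially quantifying all its variables, i.e.\ simply asking for satisfiability) is satisfiable; combined with Lemma~\ref{lemma-2} this gives ``$w$ feasible iff $\enc(w)$ satisfiable,'' but since the Fact is stated directly in terms of $\post$ the last appeal to Lemma~\ref{lemma-2} is optional. The only mildly delicate point — and the place I would be most careful — is the bookkeeping of variable names: one must check that the super-scripting scheme $\overline x^{\,i}$ (state before instruction $i$), $\overline y^{\,i}$ (post-update, pre-delay), $\overline x^{\,i+1}$ (state after instruction $i$) makes the shared variable between consecutive conjuncts exactly $\overline x^{\,i}$, with all other variables fresh, so that conjoining $\enc(w')$ and the last instruction's encoding does not accidentally identify or separate variables. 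Once that indexing is pinned down, the induction is entirely routine, which is why the paper labels this a Fact rather than a Lemma.
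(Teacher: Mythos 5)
Your proof is correct and is essentially the argument the paper leaves implicit: the paper states this as a Fact with no proof (calling it straightforward), and the intended justification is exactly your induction showing that $\post(\true,w)$ is the projection of the solution set of $\enc(w)$ onto the last block of super-scripted variables, mirroring the induction used for Lemma~\ref{lemma-2}. The only nit is bookkeeping: the variable shared between consecutive conjuncts $i$ and $i+1$ is $\overline{x}^{\,i+1}$ (not $\overline{x}^{\,i}$), and your right-to-left peeling $\post(\true,w'.\lbl_n)=\fut{\lbl_n}(\post(\true,w'))$ deserves a one-line remark since the paper defines $\post$ by peeling instructions from the left; both are routine.
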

If the terms we build are in a logic supported by SMT-solvers (\eg Linear Real Arithmetic) we can automatically check satisfiability.
If $\enc(w)$ is satisfiable we can even collect some \emph{model} which provides witness values for the $\delta_k$.
Otherwise, if $\enc(w)$ is unsatisfiable, there are some options to collect \emph{some reasons for unsatisfiability}: unsat cores or interpolants. The latter is discussed in the next section.

\smallskip
An example of an encoding for the real-time program $P_1$ (\figref{fig-ex1}) and the sequence $w_1 = i.t_0.t_2$ is given by the predicates in Equation  \eqref{eq-1}--\eqref{eq-3}.
Hence the sequence $w_1 = i.t_0.t_2$ is feasible iff $\enc(w_1) = C_0 \wedge C_1 \wedge C_2$ is satisfiable.
Using a SMT-solver, \eg with Z3, we can confirm that $\enc(w_1)$ is unsatisfiable. The interpolating\footnote{The interpolating feature of Z3 has been phased out from version 4.6.x. However, there are alternative techniques to obtain inductive interpolants \eg using unsat cores~\cite{DBLP:conf/sigsoft/DietschHMNP17}.} solver Z3 can also generate a sequence of interpolants, $I_0 = x \leq y$ and  $I_1 =  x - y \leq z$, that provide a general reason for unsatisfiability and satisfy:
\begin{equation*}
    \{ \true \} \quad i \quad \{I_0\} \quad t_0 \quad  \{I_1\} \quad t_2 \quad \{\false\}\mathpunct.
\end{equation*}
We can use the interpolants to build interpolant automata as described in the next section.

\subsection{Construction of Interpolant Automata}\label{sec:construction}

\subsubsection{Inductive Interpolant}
When it is determined that a trace $w$ is infeasible, we can easily discard such a single trace and continue searching for a different one.
However, the power of the TAR method is to generalize the infeasibility of a single trace $w$ into a family (regular set) of traces.
This regular set of infeasible traces is computed from \emph{a reason of infeasibility} of $w$ and is formally specified by an \emph{interpolant automaton}, $\ia(w)$.
The reason for infeasibility itself can be the predicates obtained by computing strongest post-conditions or weakest-preconditions or anything in between but it must be an \emph{inductive interpolant}\footnote{Strongest post-conditions and weakest pre-conditions can provide inductive interpolants}.

\begin{figure}[t]
 %   \vspace*{-.2cm}
    \centering
\begin{tikzpicture}[node distance=1.5cm and 2.5cm,very thick]
        \small
        \node[initial left,state,rectangle] (in) {$\true$};
        \node[state, right of=in, above of=in,yshift=-1.0cm] (s2) {$I_0$};
	\node[state, right of=in, below of=in,yshift=1.0cm] (s4) {$I'_0$};

        \node[state, right of=s4] (s5) {$I'_1$};
        \node[state, right of=s5] (s6) {$I'_2$};
        \node[state, right of=s6] (s7) {$I'_3$};

        \node[state, rectangle,right of=s7, above of=s7,yshift=-1.0cm, accepting] (s0) {$\false$};

	\node[state, right of=s5, above of=s6,yshift=-0.5cm] (s3) {$I'_1$};
	\path[->]
        (s2) edge node {$t_0$} (s3)
        (s3) edge node {$t_2$} (s0)
        (s4) edge[swap] node {$t_0$} (s5)
        (s5) edge[bend left] node {$t_1$} (s6)
	(s6) edge[bend left] node {$t_0$} (s5)
	(s6) edge[swap] node {$t_0$} (s7)
	(s7) edge[swap] node {$t_2$} (s0)
	(in) edge node {$i$} (s2)
	(in) edge[swap] node {$i$} (s4)
      ;
    \end{tikzpicture}
    \vspace*{-.3cm}
	\caption{Interpolant automaton for $\lang(\ia(w_1)) \cup \lang(\ia(w2))$. }
%    \vspace*{-.1cm}
    \label{fig-mix-int-clock-interpol}
\end{figure}
Given a conjunctive formula $f = C_0\wedge\cdots\wedge C_m$,  if $f$ is unsatisfiable,
an \emph{inductive interpolant}  is a sequence of predicates $I_0,\dots,I_{m-1}$ s.t:
\begin{itemize}
    \item $\true \wedge C_0 \implies I_0$,
    \item $I_{m-1} \wedge C_m \implies \false$,
    \item For each $0 \leq n < m-1$, $I_n \wedge C_{n+1} \implies I_{n+1}$, and  the variables in $I_n$ appear in both $C_n$ and $C_{n+1}$ \ie $\vars(I_n) \subseteq \vars(C_n) \cap \vars(C_{n+1})$.

\end{itemize}
If the predicates $C_0, C_1, \cdots, C_m$ encode the side effects of a sequence of instructions $\lbl_0. \lbl_1.\cdots, \lbl_m$, then  one can intuitively think of each interpolant as a \emph{sufficient} condition for infeasibility of the post-fix of the trace and this can be represented by a sequence of valid Hoare triples of the form $\{C\} \ a \ \{D\}$:
\begin{align*}
    \{ \true \} & \quad \lbl_0 \quad \{I_0\} \quad \lbl_1 \quad \{I_1\} \quad \cdots \quad \{ I_{m-1} \} \quad \lbl_m \quad \{\false\}
\end{align*}
Consider the real-time program $P_3$ of \figref{fig-ex2} and the two infeasible untimed words $w_1=i.t_0.t_2$ and $w_2=i.t_0.t_1.t_0.t_2$.
Some inductive interpolants for $w_1$ and $w_2$ can be given by: 
$I_0= y_0 \geq x_0 \wedge (k_0=0)$,
$I_1= y_1 \geq k_1$ for $w_1$ and
$I'_0=y_0 \geq x_0 \wedge k_0 \leq 0$,
$I'_1=y_1 \geq 1 \wedge k_1 \leq 0$,
$I'_2=y_2\geq k_2 + x_2$,
$I'_3=y_3\geq k_3 + 1$ for $w_2$.
From the inductive interpolants one can obtain valid Hoare triples by de-indexing the predicates in the inductive interpolants\footnote{This is a direct result of the encoding function $\enc$. The interpolants can only contain at most one version of each indexed variables.} as shown in Equations~\ref{eq-int1}-\ref{eq-int2}:
\begin{align}
    \{ \true \} & \quad i \quad \{\deidx(I_0)\} \quad t_0  \quad \{ \deidx(I_1)  \} \quad t_2 \quad \{\false\} \label{eq-int1} \\
    \{ \true \} &  \quad i  \quad \{ \deidx(I'_0)  \} \quad t_0 \quad \{\deidx(I'_1) \} \quad t_1 \quad \{ \deidx(I'_2)\} \quad t_0 \quad \{\deidx(I'_3)\} \quad t_2 \quad \{\false\} \label{eq-int2}
\end{align}
where $\deidx(I_k)$ is the same as $I_k$ where each indexed variable $x_j$ replaced by $x$. 
As can be seen in Equation~\ref{eq-int2}, the sequence contains two occurrences of $t_0$: this suggests that a loop occurs in the program, and this loop may be infeasible as well.
Formally, because $\post(\deidx(I'_2),t_0) \subseteq I'_1$, any trace of the form $i.t_0.t_1.(t_0.t_1)^\ast.t_0.t_2$ is infeasible. This enables us to construct an interpolant automaton $\ia(w_2)$  accepting the regular set of infeasible traces $i.t_0.t_1.(t_0.t_1)^\ast.t_0.t_2$.
Overall, because $w_1$ is also infeasible, the \emph{union} of the languages accepted by $\ia(w_2)$ and $\ia(w_1)$ is a set of infeasible traces as defined by the finite automaton in~\figref{fig-mix-int-clock-interpol}.

\smallskip

Given $w$ such that $\enc(w)$ is unsatisfiable we can always find an inductive interpolant: the strongest post-conditions $\post(\true, w[i])$ or (the weakest pre-conditions from $\false$) defines an inductive interpolant.
More generally, we have:
\begin{lemma}\label{lem-int-to-hoaretriple}
    Let $w = \lbl_0.\lbl_1.\cdots.\lbl_m \in \calI^\ast$.
     If $\enc(w) = C_0 \wedge C_1 \wedge \cdots \wedge C_{m}$ is unsatisfiable and  $I_0, \cdots, I_{m-1}$ is an inductive interpolant for $\enc(w)$, the following sequence of Hoare triples
    \[
        \{\true\} \quad \lbl_0 \quad \{ \deidx(I_0)\} \quad \lbl_1 \quad \{ \deidx(I_1)\} \quad \cdots \quad  
            \lbl_{m-1} \quad \{ \deidx(I_{m-1}) \} \quad \lbl_m \quad \{\false\}
        \]
        is valid.
\end{lemma}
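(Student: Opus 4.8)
The plan is to unwind the definitions and show that each individual Hoare triple in the displayed sequence is valid, using the inductiveness of the interpolant together with the relationship between $\enc$ and the $\post$ operator. Recall that a triple $\{C\}\ \lbl\ \{D\}$ is valid iff $\post(\semof{C},\lbl) \subseteq \semof{D}$, i.e. $\fut{\lbl}(\semof{C}) \subseteq \semof{D}$. So it suffices to establish, for the de-indexed predicates, the three families of implications: $\fut{\lbl_0}(\semof{\true}) \subseteq \semof{\deidx(I_0)}$; $\fut{\lbl_{n+1}}(\semof{\deidx(I_n)}) \subseteq \semof{\deidx(I_{n+1})}$ for $0 \le n < m-1$; and $\fut{\lbl_m}(\semof{\deidx(I_{m-1})}) \subseteq \semof{\false} = \varnothing$.

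The key step is to connect the semantic $\fut{\lbl}$ operator with the syntactic encoding constraints $C_i$. By Equation~\eqref{eq-post-enc}, a valuation $\overline{x}$ lies in $\fut{\lbl_i}(K(\overline{x}))$ exactly when $\exists \delta,\overline{x}'',\overline{x}'\colon K(\overline{x}'') \wedge \enc(\lbl_i,\overline{x}'',\overline{x}',\overline{x},\delta)$ holds. Now $C_i$ is precisely $\enc(\lbl_i,\overline{x}^i,\overline{y}^i,\overline{x}^{i+1},\delta^i)$, so $C_i$ relates the indexed pre-state variables $\overline{x}^i$ to the indexed post-state variables $\overline{x}^{i+1}$. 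First I would observe that the de-indexing map $\deidx$ is sound in the following sense: if $I_n$ is a predicate over the variables $\overline{x}^n$ (it mentions only the $n$-th indexed copies, since by the variable condition of an inductive interpolant $\vars(I_n) \subseteq \vars(C_n) \cap \vars(C_{n+1})$, and the only variables shared between $C_n$ and $C_{n+1}$ are the $\overline{x}^{n+1}$ — wait, I must be careful with the index bookkeeping here), then $\semof{\deidx(I_n)}$ as a set of valuations over $\rvars$ corresponds exactly to the solution set of $I_n$ projected onto that single indexed copy. The main technical care is exactly this index-tracking: $C_n = \enc(\lbl_n,\overline{x}^n,\overline{y}^n,\overline{x}^{n+1},\delta^n)$ has pre-variables $\overline{x}^n$ and post-variables $\overline{x}^{n+1}$, while $C_{n+1}$ has pre-variables $\overline{x}^{n+1}$ and post-variables $\overline{x}^{n+2}$; hence $I_n$, living over their common variables, is a predicate over $\overline{x}^{n+1}$, which de-indexes to a predicate over $\rvars$ that we read as the post-state of $\lbl_n$ and the pre-state of $\lbl_{n+1}$.

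Given that correspondence, each required inclusion follows directly from the corresponding interpolant implication. For the middle case: take $\val \in \fut{\lbl_{n+1}}(\semof{\deidx(I_n)})$; by Equation~\eqref{eq-post-enc} there is a witness pre-state $\val''$ with $\val'' \models \deidx(I_n)$ and an instantiation of $\enc(\lbl_{n+1},\cdot)$ leading to $\val$; translating back into the indexed world, we get an assignment satisfying $I_n \wedge C_{n+1}$, whence by inductiveness it satisfies $I_{n+1}$, and projecting onto the post-variables $\overline{x}^{n+2}$ and de-indexing gives $\val \models \deidx(I_{n+1})$. The first case ($\true \wedge C_0 \implies I_0$) and the last case ($I_{m-1} \wedge C_m \implies \false$, giving emptiness of the post-set) are handled identically, using Fact~\ref{fact-1}/Fact~\ref{lem:encoding} to pass between the satisfiability statements and the $\post$/$\fut{\cdot}$ formulation. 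The main obstacle is not mathematical depth but notational discipline: making the de-indexing correspondence between indexed-variable predicates $I_n$ and valuation sets $\semof{\deidx(I_n)}$ fully precise, and keeping the pre/post index offsets straight so that consecutive triples compose. I would isolate that correspondence as a small sublemma (``$\val \models \deidx(I_n)$ iff the assignment sending $\overline{x}^{n+1} \mapsto \val$ satisfies $I_n$'') and then the three implications drop out mechanically.
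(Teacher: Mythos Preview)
Your proposal is correct. You carefully track the indexing (noting that $I_n$ lives over the shared variables $\overline{x}^{n+1}$ between $C_n$ and $C_{n+1}$), isolate the de-indexing correspondence as a sublemma, and then verify each Hoare triple directly via the $\post$/$\fut{\cdot}$ semantics and Equation~\eqref{eq-post-enc}. This is sound and more explicit than what the paper provides.

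The paper's own proof takes the dual route: rather than working forward with $\post$, it argues by backward induction that each $\deidx(I_k)$ is contained in the weakest precondition $\textit{wp}(\false,\lbl_{k+1}\cdots\lbl_m)$, using the interpolant implication $I_k \wedge C_{k+1} \Rightarrow I_{k+1}$ at each step. Both arguments rest on the same core observation---that the interpolant step $I_k \wedge C_{k+1} \Rightarrow I_{k+1}$ translates, via the encoding/semantics correspondence you spell out, into the validity of the triple $\{\deidx(I_k)\}\ \lbl_{k+1}\ \{\deidx(I_{k+1})\}$---so the difference is presentational rather than substantive. Your forward/$\post$ formulation has the minor advantage of matching how the paper actually defines Hoare-triple validity, and your explicit treatment of the index bookkeeping fills in detail that the paper's one-line sketch omits.
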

\begin{proof}
    The proof follows from the encoding $\enc(w)$ and the fact that each $I_k$ is included in the weakest pre-condition $\textit{wp}(\false,\lbl_{k+1}.\lbl_m)$ which can be proved by induction using the property of inductive interpolants.
    \qed
\end{proof}
\smallskip

\subsubsection{Interpolant Automata}

Let us formalize the interpolant-automata construction.
Let $w =\lbl_0.\lbl_1. \cdots.\lbl_m \in \calI^\ast$, $\enc(w) = C_0 \wedge \dots \wedge C_{m}$ and assume $\post(\true,w) \subseteq \false$ \ie $\enc(w)$ is unsatisfiable (Fact~\ref{lem:encoding}).

Let $I_0,\dots I_{m-1}$ be an inductive interpolant for $C_0 \wedge \dots \wedge C_{m}$. 
We can construct an interpolant automaton for $w$, $\ia(w)=(\states^w,\astate^w_0,\alphabet^w,\tfunc^w,\accept^w)$ as follows:
\begin{itemize}
    \item  $\states^w=\{\true,\false,\deidx(I_0),\cdots,\deidx(I_{m-1})\}$, (note that if two de-indexed interpolants are the same they account for one state only),
    \item $\Sigma^w = \{ \lbl_0,\lbl_1, \cdots,\lbl_m \}$,
    \item $F^w = \{ \false \}$,
    \item $\Delta^w$ satisfies following conditions:
        \begin{enumerate}
            \item $(\true, \lbl_0, \pi(I_0)) \in \Delta^w$,
            \item $(\pi(I_{m-1}), \lbl_m, \false) \in \Delta^w$,
            \item $\forall a \in \Sigma^w, \forall 0 \leq k,j \leq m - 1$, if $\post(\deidx(I_k),a) \subseteq \deidx(I_j)$ then $(\deidx(I_k), a, \deidx(I_{j})) \in \Delta^w$.
        \end{enumerate}
\end{itemize}
Notice that as $\post(\deidx(I_k),\lbl_{k+1}) \subseteq \deidx(I_{k+1})$ the word $w$ itself is accepted by $\ia(w)$ and $\ia(w)$ is never empty.

\begin{theorem}[Interpolant Automata]\label{lem:interpolants}
Let $w$ be an infeasible word over $\prog$, then for all $w'\in\lang(\ia(w))$, $w'$ is infeasible.
\end{theorem}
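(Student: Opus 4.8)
The plan is to show that any word accepted by $\ia(w)$ is infeasible by proving, via induction along an accepting run, that the de-indexed interpolant labelling each state is a correct Hoare-style over-approximation of the post operator. Concretely, fix $w' = a_0.a_1.\cdots.a_p \in \lang(\ia(w))$ and let $\true = q_0, q_1, \dots, q_{p+1} = \false$ be the corresponding sequence of states visited on an accepting run, so that $(q_j, a_j, q_{j+1}) \in \Delta^w$ for all $j$. Each $q_j$ is either $\true$, $\false$, or some $\deidx(I_{k})$. First I would record the key \emph{soundness-of-edges} property: by construction of $\Delta^w$ (conditions 1--3) together with Lemma~\ref{lem-int-to-hoaretriple}, every edge $(q_j, a_j, q_{j+1})$ corresponds to a \emph{valid} Hoare triple $\{q_j\}\ a_j\ \{q_{j+1}\}$, i.e. $\post(\semof{q_j}, a_j) \subseteq \semof{q_{j+1}}$. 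For the initial edge this is the validity of $\{\true\}\ \lbl_0\ \{\deidx(I_0)\}$ from Lemma~\ref{lem-int-to-hoaretriple}; for the final edge it is $\{\deidx(I_{m-1})\}\ \lbl_m\ \{\false\}$; and for the ``internal'' edges it is exactly the defining condition $\post(\deidx(I_k), a) \subseteq \deidx(I_j)$ of clause~3.

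Next I would chain these triples. By a straightforward induction on $j$ using the inductive definition of $\post$ and monotonicity of $\fut{\cdot}$ (hence of $\post$) with respect to set inclusion, one gets $\post(\true, a_0.\cdots.a_{j-1}) \subseteq \semof{q_j}$ for every $j \le p+1$. Taking $j = p+1$ yields $\post(\true, w') \subseteq \semof{\false} = \false$, which by Lemma~\ref{lemma-2} (equivalently Fact~\ref{lem:encoding}) means precisely that $w'$ is infeasible. The only mild subtlety is that an accepting run of $\ia(w)$ might in principle revisit states or traverse the final state $\false$ only at the end; but since $\false$ is the unique accepting state and it has no outgoing edges in $\Delta^w$ (conditions 1--3 never produce an edge out of $\false$), every accepting run ends at $\false$ and never passes through it earlier, so the chaining argument is well-founded.

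I expect the main obstacle to be purely bookkeeping rather than conceptual: one must be careful that the states of $\ia(w)$ are \emph{de-indexed} predicates while the inductive-interpolant inequalities from the definition are stated over \emph{indexed} variables, so the passage from ``$\enc(w)$ unsatisfiable with interpolant $I_0,\dots,I_{m-1}$'' to ``$\{\deidx(I_k)\}\ a\ \{\deidx(I_{k+1})\}$ is a valid Hoare triple'' must invoke Lemma~\ref{lem-int-to-hoaretriple} (and, for the edges added by clause~3 between \emph{arbitrary} $I_k, I_j$, simply the semantic side-condition $\post(\deidx(I_k),a)\subseteq\deidx(I_j)$ that is imposed by fiat in the construction). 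Once that translation is in place, the argument is the generic TAR soundness argument: valid Hoare triples compose, and a trace with postcondition $\false$ is infeasible.
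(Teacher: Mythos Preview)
Your proposal is correct and follows essentially the same approach as the paper's own proof: both argue that every edge of $\ia(w)$ corresponds to a valid Hoare triple (via rule~3 and Lemma~\ref{lem-int-to-hoaretriple}), so any accepting run yields a chain of valid triples from $\true$ to $\false$, giving $\post(\true,w')\subseteq\false$. The paper's proof is in fact only a two-sentence sketch of exactly this argument; your version is a faithful and more detailed expansion of it.
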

\begin{proof}
    This proof is essentially the same as the original one in~\cite{traceref}.
    The proof uses rule~3 in the construction of $\ia(w)$: every word accepted by $\ia(w)$
    goes through a sequence of states that form a sequence of valid Hoare triples and end up in $\false$.
    It follows that if $w' \in \ia(w)$, $\post(\true, w') \subseteq \false$. 
     \qed  
\end{proof}

\subsection{Union of Interpolant Automata}\label{sec:union}

In the TAR algorithm we construct interpolant automata at each iteration and the current refinement $R$ is the \emph{union} of the regular languages $\lang(\ia(w_k))$ for each infeasible $w_k$.
The union can be computed using standard automata-theoretic operations.
This assumes that we somehow \emph{forget} the predicates associated with each state of an interpolant automaton.

In this section we introduce a new technique to re-use the information computed in each $\ia(w_k)$ and obtain larger refinements.

\smallskip
Let $A=(Q,q_0, \Sigma, \Delta, F)$ be a finite automaton such that each $q \in Q$ is a predicate in $\varphi(X)$.
We say that $A$ is \emph{sound} if the transition relation $\Delta$ satisfies: $(I,\alpha,J) \in \Delta$ implies that
$\fut{\alpha}(I) \subseteq J$ (or $\post(I,\alpha) \subseteq J$).

Let $R=(Q^R, \{\true\}, \Sigma^R, \Delta^R, \{ \false\})$ be a sound finite automaton that accepts only infeasible traces.
Let $w \in \calI^\ast$ with $w$ infeasible. The automaton $\ia(w)=(Q^w, \{\true\}, \Sigma^w, \Delta^w, \{ \false\})$ built as described in section~\ref{sec:construction} is sound. We can define an \emph{extended union}, $R \uplus \ia(w) = ( Q^R \cup Q^w, \{ \true \}, \Sigma^R \cup \Sigma^w, \Delta^{R \uplus \ia(w)}, \{\false\} )$ of $R$ and $\ia(w)$ with:
\[
    \Delta^{R \uplus \ia(w)} = \{ (p, \alpha, p')  \} \mid \exists (q,\alpha,q') \in \Delta^R \cup \Delta^w \text{ s.t.}  p\subseteq q\text{ and }p'\supseteq q'\}.\label{union2}
\]
It is easy to see that $\lang(R \uplus \ia(w)) \supseteq \lang(R) \cup \lang(\ia(w))$ but also:

\begin{theorem}
    Let $w' \in \lang(R \uplus \ia(w))$. Then $\post(\true,w') \subseteq \false$.
\end{theorem}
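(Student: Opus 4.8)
The plan is to show that every word accepted by the extended union $R \uplus \ia(w)$ traces out a path through predicate-states that forms a sequence of valid Hoare triples starting in $\true$ and ending in $\false$, from which infeasibility follows exactly as in the proof of Theorem~\ref{lem:interpolants}. The whole argument hinges on the fact that both $R$ and $\ia(w)$ are \emph{sound} automata (in the sense just defined), and that the construction of $\Delta^{R \uplus \ia(w)}$ preserves soundness.

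\medskip

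\noindent\emph{Step 1: The extended union is sound.} First I would verify that $R \uplus \ia(w)$ is a sound automaton. Take any transition $(p,\alpha,p') \in \Delta^{R \uplus \ia(w)}$. By definition there is $(q,\alpha,q') \in \Delta^R \cup \Delta^w$ with $p \subseteq q$ and $p' \supseteq q'$. Since both $R$ and $\ia(w)$ are sound, $\post(q,\alpha) \subseteq q'$. Monotonicity of $\post$ in its first argument gives $\post(p,\alpha) \subseteq \post(q,\alpha) \subseteq q' \subseteq p'$, so $(p,\alpha,p')$ satisfies the soundness condition. Hence $R \uplus \ia(w)$ is sound. (I should note that $\post$ is monotone: if $K \subseteq K'$ then $\fut{\alpha}(K) = \bigcup_{\nu \in K}\fut{\alpha}(\nu) \subseteq \bigcup_{\nu \in K'}\fut{\alpha}(\nu) = \fut{\alpha}(K')$, and this lifts inductively to words.)

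\medskip

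\noindent\emph{Step 2: Soundness plus the $\true$-start/$\false$-accept shape implies all accepted words are infeasible.} Now let $w' = \alpha_0.\alpha_1.\cdots.\alpha_n \in \lang(R \uplus \ia(w))$. Since the automaton has initial state $\true$ and unique accepting state $\false$, there is an accepting run $\true = p_0 \xrightarrow{\alpha_0} p_1 \xrightarrow{\alpha_1} \cdots \xrightarrow{\alpha_n} p_{n+1} = \false$ with each $(p_i,\alpha_i,p_{i+1}) \in \Delta^{R \uplus \ia(w)}$. By Step~1 each step gives $\post(p_i,\alpha_i) \subseteq p_{i+1}$, i.e. $\{p_i\}\ \alpha_i\ \{p_{i+1}\}$ is a valid Hoare triple. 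Chaining these (a routine induction on $i$ using the inductive definition of $\post$ on words) yields $\post(p_0, w') \subseteq p_{n+1}$, that is $\post(\true, w') \subseteq \false$, which is exactly the claim. Finally I would remark that this also re-establishes $\lang(R \uplus \ia(w)) \supseteq \lang(R) \cup \lang(\ia(w))$, since for each transition $(q,\alpha,q')$ of $R$ or $\ia(w)$ the triple $(q,\alpha,q')$ itself witnesses membership in $\Delta^{R \uplus \ia(w)}$ (take $p = q$, $p' = q'$), so the extended union only adds behaviour — and by the above all of it is still infeasible.

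\medskip

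\noindent I do not expect a serious obstacle here: the theorem is essentially a packaging result, and the one thing to be careful about is making the monotonicity of $\post$ explicit (it is used but not separately stated as a lemma) and confirming that the $\uplus$ construction indeed keeps the designated states $\true$ and $\false$ as initial and unique accepting state so that the Hoare-chain argument closes off with $\false$ at the end. If anything, the mildly delicate point is that the interpolant states $\deidx(I_k)$ in $\ia(w)$ and the states of $R$ live in a common predicate universe, so that inclusions like $p \subseteq q$ between a state of one automaton and a state of the other make sense; this is fine because all states are predicates in $\bexpr(X)$ interpreted as subsets of $[\rvars \to \setR]$.
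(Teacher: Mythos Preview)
Your proposal is correct and follows essentially the same approach as the paper: show that every transition of $R \uplus \ia(w)$ is a valid Hoare triple (because it is a weakening of an already-established triple in $\Delta^R \cup \Delta^w$), then chain these along an accepting run from $\true$ to $\false$. You spell out the monotonicity of $\post$ and the chaining induction explicitly, whereas the paper leaves these implicit, but the argument is the same.
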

\begin{proof}
    Each transition $(p, \alpha, p') $ in $R \uplus \ia(w)$ corresponds to a valid Hoare triple.
    It is either in $\Delta^R$ or $\Delta^w$ and then is valid by construction or it is weaker than an established Hoare triple in $\Delta^R$ or $\Delta^w$.
    \qed
\end{proof}
This theorem allows us to use the $\uplus$ operator in Algorithm~\ref{algo-1} instead of the standard union of regular languages. The advantage is that we re-use already established Hoare triples to build
a larger refinement at each iteration.
\subsection{Feasibility Beyond Timed Automata}

Satisfiability can be checked with an SMT-solver (and decision procedures exist for useful theories).
In the case of timed automata and stopwatch automata, the feasibility of a trace can be encoded in linear arithmetic. The corresponding theory, Linear Real Arithmetic (LRA) is decidable and supported by most SMT-solvers.
It is also possible to encode non-linear constraints (non-linear guards and assignments).
In the latter cases, the SMT-solver may not be able to provide an answer to the SAT problem as
non-linear theories are undecidable. However, we can still build on a semi-decision procedure of the SMT-solver, and if it provides an answer, get the status of a trace (feasible or not).

\subsection{Sufficient Conditions for Termination}
Let us now construct a set of criteria on a real-time program $P=((Q,q_0,\calI, \Delta, F),\semof{\cdot})$ s.t. our proposed method is guaranteed to terminate.
\begin{lemma}{Termination}
The algorithm presented in \figref{fig-tar} terminates if the following three conditions hold.

\begin{enumerate}
	\item For any word $\word\in\calI^\ast$, then $\semof{\word}$ is expressible within a decidable theory (supported by the solver), and\label{term1}
	\item the statespace of $P$ has a finite representation, and\label{term2}
	\item the solver used returns interpolants within the finite  statespace representation.\label{term3}
\end{enumerate}
\end{lemma}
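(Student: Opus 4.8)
The plan is to split the claim into two independent parts: (i) every single pass through the loop in \figref{fig-tar} (equivalently \algref{algo-1}) halts, and (ii) only finitely many passes occur. Part~(i) follows from condition~\ref{term1}: the only step that could diverge is a solver call, and condition~\ref{term1} makes $\enc(w)$ lie in a decidable theory, so the feasibility test $\post(\true,w)\not\subseteq\false$ (Fact~\ref{lem:encoding}) always returns, and decidability of the same theory makes the inclusion tests $\post(\deidx(I_k),a)\subseteq\deidx(I_j)$ used in the construction of $\ia(w)$ terminate as well. The remaining operations — choosing a word of $\lang(A_\prog)\setminus R$, testing $\lang(A_\prog)\subseteq R$, and forming $R\cup\lang(\ia(w))$ (or the extended union $R\uplus\ia(w)$) — are standard computable operations on finite automata. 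So each pass halts, and it remains to bound their number.

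For part~(ii) I would first isolate the finite ``vocabulary'' the algorithm can use. Since $\Delta$ is finite, only finitely many instructions $\Sigma_\prog\subseteq\calI$ label transitions of $A_\prog$, and every $w\in\lang(A_\prog)$ uses letters from $\Sigma_\prog$ only. By conditions~\ref{term2} and~\ref{term3} there is a \emph{finite} set $\mathcal{D}\subseteq\bexpr(\rvars)$ such that for every infeasible $w\in\lang(A_\prog)$ each de-indexed interpolant $\deidx(I_j)$ returned by the solver belongs to $\mathcal{D}$. Hence every interpolant automaton $\ia(w)$ the run can build has state set contained in $\mathcal{D}\cup\{\true,\false\}$, alphabet contained in $\Sigma_\prog$, and transition relation contained in the finite set $(\mathcal{D}\cup\{\true,\false\})\times\Sigma_\prog\times(\mathcal{D}\cup\{\true,\false\})$; there are therefore only finitely many, say $N$, automata $\mathcal{A}_1,\dots,\mathcal{A}_N$ that can occur as some $\ia(w)$.

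The argument then closes by monotonicity. Let $R_0=\varnothing$ and let $R_k$ be the (regular) refinement language after the $k$-th pass, where $w_k\in\lang(A_\prog)\setminus R_{k-1}$ is the word chosen then, so $R_k=R_{k-1}\cup\lang(\ia(w_k))$. The construction guarantees $w_k\in\lang(\ia(w_k))$ while $w_k\notin R_{k-1}$, so $R_{k-1}\subsetneq R_k$. Writing $\ia(w_k)=\mathcal{A}_{\phi(k)}$, we get $R_k=\bigcup\{\lang(\mathcal{A}_i):i\in\phi(\{1,\dots,k\})\}$, and strictness of $R_{k-1}\subsetneq R_k$ forces $\phi(\{1,\dots,k\})\supsetneq\phi(\{1,\dots,k-1\})$, whence $k\le N$. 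Thus after at most $N$ refinements either $\lang(A_\prog)\subseteq R$ holds and the algorithm returns $\tlang(\prog)=\varnothing$, or the chosen word is feasible and it returns $\tlang(\prog)\neq\varnothing$; in either case it terminates. If the extended union $\uplus$ is used, the same bound applies, since each intermediate automaton is again of the bounded shape above and hence its language ranges over a finite set while strictly growing.

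The only genuinely delicate point — and the step I expect to be the real work — is extracting the finite set $\mathcal{D}$ from the deliberately informal conditions~\ref{term2}--\ref{term3}: one must make precise what ``a finite representation of the state space'' means and argue that it bounds not merely the reachable valuations but the de-indexed interpolants the solver may emit along a trace, in particular that the per-step variable indexing introduced by $\enc$ cannot, after de-indexing, produce unboundedly many distinct predicates. Once $\mathcal{D}$ and $\Sigma_\prog$ are fixed, both the effectiveness of each step and the counting argument above are routine.
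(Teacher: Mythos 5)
Your proposal is correct and follows essentially the same route as the paper's proof: conditions~\ref{term2} and~\ref{term3} bound the possible (de-indexed) interpolants, finiteness of $\Delta$ bounds the alphabet, and strict growth of $R$ in every iteration yields termination by a pigeonhole argument. The only cosmetic difference is bookkeeping --- the paper counts new transitions added to the extended-union refinement automaton of Section~\ref{sec:union}, while you count distinct interpolant automata --- and you additionally make explicit the per-iteration termination via condition~\ref{term1}, which the paper leaves implicit.
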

\begin{proof}
First consider the algorithm presented in \figref{fig-tar}, then we can initially state that for each iteration of the loop $R$ grows and thus the NFA representing $R$ ($\automata^R$) must also.
As per the construction presented in Section~\ref{sec:union} we can observe that the transition-function of $\automata^R$ will increase by at least one in each iteration in Step 3. 
If not, the selection of $\word$ between step 1 and step 2 is surely violated or the construction of $\ia$ in step 3 is.

From Conditions \ref{term2} and \ref{term3} we have that the statespace is finitely representable and that these representatives are used by the solver. 
Thus we know that the interpolant automata also has a finite set of states as per the construction of Section~\ref{sec:union}.
Together with the finiteness of the set of instructions, this implies that the transition-function of the interpolant automata must also be finite.
Hence, the algorithm can (at most) introduce a transition between each pair of states with each instruction, but must at least introduce a new one in every iteration.\qed
\end{proof}

As this termination condition relies on the solver, it is heavily dependent on the construction of the solver.
However, if we consider the class of real-time programs captured by Timed Automata, we know that condition \ref{term1} is satisfied (in fact it is Linear Real Arithmetic), condition \ref{term2} is satisfied via the region-graph construction.
This leaves the construction of a solver satisfying condition \ref{term3}, which in turn should be feasible already from condition \ref{term2}, but is practically achievable for TA via extrapolation-techniques and difference bound matrices (or for systems with only non-strict guards; timed-darts or integer representatives).

\section{Parameter Synthesis for Real-Time Programs}\label{sec:synth}

In this section we show how to use the trace abstraction refinement semi-algorithm presented in Section~\ref{sec:tar} to synthesize \emph{good initial values} for some of the program variables, and to check \emph{robustness} of timed automata.
We first define the \emph{Maximal Safe Initial State} problem and then show how to reduce parameter synthesis and robustness to special cases of this problem.

\smallskip

\subsection{\bfseries Maximal Safe Initial Set Problem}
Given a real-time program $\prog$, the objective is to determine a set of \emph{initial valuations} $I \subseteq [\rvars \rightarrow \setR]$
such that, when we start the program in $I$, $\tlang(\prog)$ is empty.

Given a constraint  $I \in \bexpr(\rvars)$, we define the corresponding \emph{assume} instruction by:
$\test(I) = (I, \textit{Id}, \overline{0})$. This instruction leaves all the variables unchanged (discrete update is the identity function and the rate vector is $\overline{0}$) and this acts as a guard only.

Let  $\prog=(Q, q_0, \calI, \Delta, F)$ be a real-time program and $I \in \bexpr(\rvars)$. We define the
real-time program $\test(I).\prog=(Q, \{ \iota \}, \calI \cup \{ \test(I) \}, \Delta \cup \{(\iota,\test(I),q_0)\}, F)$.

\smallskip
The \emph{\bf \itshape maximal safe initial state problem}  asks the following:
\begin{quote}
	\bf Given a real-time program $\prog$, find a maximal $I \in \bexpr(\rvars)$ s.t. $\tlang(\test(I).\prog) = \varnothing$.
\end{quote}
\subsection{\bfseries Semi-Algorithm for the Maximal Safe Initial State Problem}
Let $w \in \lang(\test(I).P)$ be a feasible word. It follows that $\enc(w)$ must be satisfiable.
We can define the set of initial values for which $\enc(w)$ is satisfiable by projecting away all the variables in the encoding $\enc(w)$ except the ones indexed by $0$. Let $I_0 = \exists (\vars(\enc(w)) \setminus X^0) .\enc(w)$ be the resulting (existentially quantified) predicate and $\deidx(I_0)$ be the corresponding constraint on the program variables without indices. We let $\exists_i(w) = \deidx(I_0)$.
It follows that $\exists_i(w)$ is the maximal set of valuations for which $w$ is feasible.
Note that existential quantification for the theory of Linear Real Arithmetic is within the theory via Fourier–Motzkin-elimination -- hence the computation of $\exists_i(w)$ by an SMT-solver only needs support for Linear Real Arithmetic when $P$ encodes a linear hybrid, stopwatch or timed automaton.\footnote{This idea of using Fourier-Motzkin elimination has already been proposed~\cite{10.1007/3-540-48320-9_14} in the context of timed Petri nets.}

\smallskip

The TAR-based semi-algorithm for the maximal safe initial state problem is presented in \figref{fig-rob-emptiness}.
\begin{figure}[ht]
\centering
\begin{tikzpicture}[scale=1,node distance=1.4cm and 1.5cm, very thick, bend angle=20,bend angle=10]
\fontsize{10}{11}\selectfont
% \tikzset{green/.style={rectangle, rounded corners,thick,draw=black},
%     red/.style={rectangle, rounded corners,thick,draw=black},
%     adam/.style={rectangle, rounded corners,thick,draw=black},
%     module/.style={rectangle, rounded corners,thick,draw=black}}
 \node[module](0,0) (init) {\textbf{1:} $\tlang(\test(I).P) = \varnothing$?};
   \node[below of=init,xshift=-0cm,green] (nobug) {Maximal safe init is $I$};
  \node[above of=init,xshift=-2.9cm] (start) {$I := \true$};
\node[module,right of=init, xshift=6cm] (step3) {\textbf{2:} $I := I \wedge \neg \exists_i(\unt(\sigma))$};
\coordinate[above of=step3] (o3);
\coordinate[above of=init] (oi);
\path[->] (init.south) edge[draw=green,swap] node {\green{Yes}} (nobug);
\draw[->,draw=red]  (init.north) node[xshift=-0.3cm,yshift=.5cm] {\red{No}} -- (oi) -- node[yshift=0cm] {
				\begin{tabular}{c}
					\red{Let $\word \in \tlang(\test(I).\prog)$}
				\end{tabular}} (o3) -- (step3.north);

 \draw[->] (step3) -- (init);
 \draw[->] (start) |-  ($(init.west)$);
\end{tikzpicture}
% \vspace{-2.8em}
\caption{Semi-algorithm $\mathit{SafeInit}$.}
\label{fig-rob-emptiness}
\end{figure}
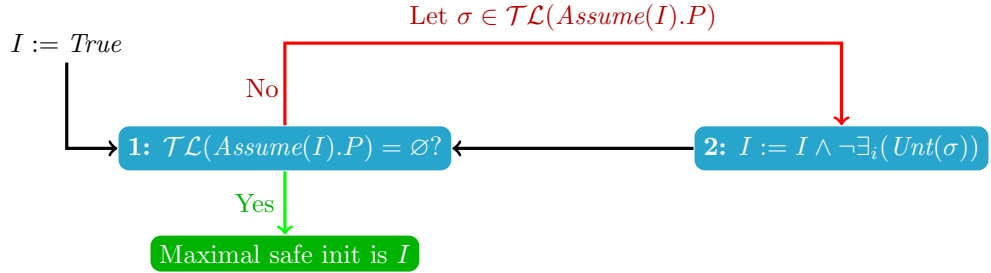
\noindent The semi-algorithm in~\figref{fig-rob-emptiness} works as follows:
\begin{enumerate}
	\item initially $I = \true$
	\item using the semi-algorithm~\ref{algo-1}, check whether $\tlang(\test(I).\prog)$ is empty
	\item if so $\prog$ does not accept any timed word when we start from $\semof{I}$;
	\item Otherwise, there is a witness word $\word \in \tlang(\test(I).\prog)$,
	implying that $I\wedge\enc(\unt(\word))$ is satisfiable.
	It follows that $\exists_i.\enc(\unt(\word))$ cannot be part of the maximal 
	set. It is used to strengthen $I$ and repeating from step~2.
\end{enumerate}

If the semi-algorithm terminates, it computes exactly \textbf{the} maximal set of values for which the system is safe ($I$), captured formally by Theorem~\ref{thm-max-constraint}.
\begin{theorem}\label{thm-max-constraint}
If  the semi-algorithm $\mathit{SafeInit}$ terminates and outputs $I$, then:
\begin{enumerate}
	\item $\tlang(\test(I).\prog) = \varnothing$ and
	\item for any $I'\in \bexpr(\rvars)$, $\tlang(\test(I').\prog) = \varnothing$ implies $I' \subseteq I$.
\end{enumerate}
\end{theorem}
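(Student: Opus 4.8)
The plan is to prove the two conjuncts separately, both exploiting the invariant maintained by the loop of $\mathit{SafeInit}$, namely that at every iteration $I$ is a constraint that has only been strengthened by subtracting sets of the form $\exists_i(\unt(\word))$ for witnesses $\word$ that were actually feasible in $\test(I).\prog$ at that point.

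For part~(1): if the algorithm terminates and outputs $I$, this is exactly because the inner call to the semi-algorithm of \algref{algo-1} (step~1 of \figref{fig-rob-emptiness}) returned ``Yes'', i.e. certified $\tlang(\test(I).\prog) = \varnothing$. So part~(1) is immediate from the correctness of \algref{algo-1} (the theorem on page with $\tlang(\prog) \neq \varnothing$ iff $\exists w \in \lang(A_\prog)$ with $\post(\true,w) \not\subseteq \false$, together with Fact~\ref{lem:encoding}): termination with output $I$ means $\lang(A_{\test(I).P}) \subseteq R$ for the final refinement $R$ consisting solely of infeasible traces.

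For part~(2): let $I' \in \bexpr(\rvars)$ satisfy $\tlang(\test(I').\prog) = \varnothing$; we must show $\semof{I'} \subseteq \semof{I}$. The key claim is the loop invariant: \emph{at every iteration, $\semof{I'} \subseteq \semof{I}$}. Initially $I = \true$ so $\semof{I'} \subseteq \semof{\true}$ trivially. For the inductive step, suppose $\semof{I'} \subseteq \semof{I}$ at the start of an iteration that does not terminate: then there is a witness $\word \in \tlang(\test(I).\prog)$, and the update sets $I_{\mathrm{new}} := I \wedge \neg\exists_i(\unt(\word))$. We need $\semof{I'} \subseteq \semof{\neg \exists_i(\unt(\word))}$, i.e. $\semof{I'} \cap \semof{\exists_i(\unt(\word))} = \varnothing$. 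Now $\exists_i(\unt(\word))$ is, by its defining property, exactly the set of initial valuations from which the untimed word $\unt(\word)$ is feasible. If some valuation $\nu \in \semof{I'}$ also lay in $\semof{\exists_i(\unt(\word))}$, then $\unt(\word)$ would be feasible when started from $\nu$, hence $\test(I').\prog$ would accept a feasible timed word on the letter sequence $\test(I').\unt(\word)$ — recall $\unt(\word)$ is accepted by $A_P$ since $\word \in \tlang(\test(I).\prog)$ implies $\unt(\word) \in \lang(A_{\test(I).P})$, and the CFGs of $\test(I).\prog$ and $\test(I').\prog$ agree apart from the assume-guard — contradicting $\tlang(\test(I').\prog) = \varnothing$. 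Hence $\semof{I'} \subseteq \semof{\neg\exists_i(\unt(\word))}$ and combined with $\semof{I'} \subseteq \semof{I}$ we get $\semof{I'} \subseteq \semof{I_{\mathrm{new}}}$, closing the induction. On termination $I$ equals its current value, so $I' \subseteq I$.

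\textbf{Main obstacle.} The routine parts are the two inductions; the one subtlety worth stating carefully is the defining property of $\exists_i(w)$ — that $\deidx$ of the projection of $\enc(w)$ onto the index-$0$ variables really is \emph{the maximal} set of initial valuations for which $w$ is feasible, with no spurious valuations lost or gained. This rests on Fact~\ref{lem:encoding} (feasibility of $w$ $\iff$ satisfiability of $\enc(w)$) together with the fact that quantifier elimination is exact in the relevant theory, and on a small bookkeeping point: the ``index-$0$'' variables $X^0$ in $\enc(w)$ are precisely the pre-values of the program variables at the start of $w$, which correspond under $\deidx$ to the free program variables constrained by an $\test(\cdot)$ instruction. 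Getting this correspondence stated cleanly (so that ``$\nu \models \exists_i(\unt(\word))$'' genuinely means ``$\test(\{\nu\}).\unt(\word)$ is feasible'') is the one place the argument needs care; everything else is a direct consequence of the already-established facts about $\enc$, $\post$, and \algref{algo-1}.
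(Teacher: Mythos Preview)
Your proposal is correct and follows essentially the same approach as the paper: part~(1) is immediate from the termination condition of the loop, and part~(2) is proved via the loop invariant that only unsafe initial valuations are ever removed from $I$ (the paper phrases this as ``$I$ is maximal'' throughout, you phrase it equivalently as ``$\semof{I'} \subseteq \semof{I}$'' for an arbitrary safe $I'$). Your write-up is more detailed than the paper's, in particular in spelling out why $\exists_i(\unt(\word))$ consists exactly of unsafe initial valuations, but the underlying argument is the same.
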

\begin{proof}
	The fact that $\tlang(\test(I).\prog) = \varnothing$ follows from termination.

	The fact that $I$ is maximal is an invariant of the semi-algorithm: at the beginning, $I = \true$ and is clearly maximal. At each iteration, we may subtract a set of valuations $K$ from the previously computed $I$, but these valuations are all such that $\tlang(\test(\nu).P) \neq \varnothing$ for any $\nu\in K$ by definition of existential quantification.

	Hence every time a set of valuations is removed by strengthening $I$ only unsafe initial valuations are removed.
	It follows that if $\textit{safeInit}$ terminates, $I$ is maximal.

	\qed
\end{proof}

\subsection{Parameter Synthesis}

Let  $\prog=(Q, q_0, \calI, \Delta, F)$ be a real-time program over a set of variables $X \cup U$ such that: $\forall u \in U, \forall (g,\updt,\rates) \in \Delta, (\nu,
\nu') \in \updt \implies \nu(u) = \nu'(u)$ and $\rates(u) = 0$. In words, variables in $U$ are constant variables. Note that they can appear in the guard $g$.

\smallskip
The \emph{\bf \itshape parameter synthesis problem}  asks the following:
\begin{quote}
	\bf \hskip-.9em Given a real-time program $\prog$, find a maximal set $I \in \bexpr(U)$ s.t. $\tlang(\test(I).\prog) = \varnothing$.
\end{quote}
The \emph{parameter synthesis problem} is a special case of
the maximal safe initial state problem.
% enables us to solve the \emph{parameter synthesis problem} as a special case of the maximal safe initial state problem.	 
Indeed, solving the maximal safe initial state problem allows us to find the maximal set of parameters such that $\tlang(P) = \varnothing$. 
Let $I$ be a  solution\footnote{For now assume there is a unique maximal solution.} to the maximal safe initial state problem. Then $\exists (\vars(P) \setminus  U).I$ is a maximal set of parameter values such that $\tlang(P) = \varnothing$.

\subsection{Robustness Checking}\label{sec:robust}
Another remarkable feature of our technique is that it can readily be used to check \emph{robustness} of real-time programs and hence timed automata.
In essence, checking robustness amounts to enlarging the guards of a real-time program $P$ by an $\varepsilon > 0$.
The resulting program is $P_\varepsilon$.

\smallskip
The \emph{\bf \itshape robustness problem}  asks the following:
\begin{quote}
	Given a real-time program $\prog$, is there some $\epsilon > 0$, s.t. $\tlang(\prog_\epsilon) = \varnothing$.
\end{quote}

Using our method we can solve the \emph{\bf \itshape robustness synthesis problem} which asks the following:
\begin{quote}
	Given a real-time program $\prog$, find a maximal $\epsilon > 0$, s.t. $\tlang(\prog_\epsilon) = \varnothing$.
\end{quote}
This problem asks for a witness (maximal) value for $\epsilon$.

The robustness synthesis is a special case of the parameter synthesis problem where $\epsilon$ is a parameter of the program $P$.

Note that in our experiments (next section), we assume that $P$ is robust and in this case we can compute a maximal value for $\epsilon$.
Proving that a program is non-robust requires proving \textit{feasibility} of infinite traces for ever decreasing $\epsilon$.
We have developed some techniques (similar to proving termination for standard programs) to do so but this is still under development.

\section{Experiments}\label{sec:experiments}

We have conducted three sets of experiments, each testing the applicability of our proposed method (denoted by \tar) compared to state-of-the-art tools with specialized data-structures and algorithms for the given setting.
All experiments were conducted on AMD EPYC 7551 Processors and limited to 1 hour of computation.
The \tar tool uses the \uppaal parsing-library, but relies on \zthree \cite{z3} for the interpolant computation.
Our experimental setup is available online~\cite{peter_gjol_jensen_2020_3952642}.

\subsection{Verification of Timed and Stopwatch Automata}\label{exp:stopwatch}
The real-time programs, $P_1$ of \figref{fig-ex1} and $P_2$ of \figref{fig-ex2} can be analyzed with our technique.
The analysis (\tar algorithm~\ref{algo-1}) terminates in two iterations for the program $P_1$, a stopwatch automaton. As emphasized in the introduction, neither \uppaal (over-approximation with DBMs) nor \phaver can provide the correct answer to the reachability problem for $P_1$.

To prove that location $2$ is unreachable in program $P_2$ requires to discover an invariant that
mixes integers (discrete part of the state) and clocks (continuous part).
Our technique successfully discovers the program invariants. As a result the refinement depicted in \figref{fig-mix-int-clock-interpol} is constructed
and as it contains $\lang(A_{P_2})$ the refinement algorithm RTTAR terminates and proves that $2$ is not reachable. $A_{P_2}$ can only be analyzed in \uppaal with significant computational effort and bounded integers.

\subsection{Parametric Stopwatch Automata}
We compare the \tar tool to \imitator \cite{imitator} -- the state-of-the-art parameter synthesis tool for reachability\footnote{We compare with the \texttt{EFSynth}-algorithm in the \imitator tool as this yielded the lowest computation time in the two terminating instances.}.
We shall here use the semi-algorithm presented in Section \ref{sec:synth}
For the test-cases we use the gadget presented initially in \figref{fig-ex1}, a few of the test-cases used in \cite{Andre2015}, as well as two modified versions of Fischers Protocol, shown in~\figref{fig:fischer}. 
In the first version we replace the constants in the model with parameters. 
In the second version (marked by robust), we wish to compute an expression, that given an arbitrary upper and lower bound yields the robustness of the system -- in the same style as the experiments presented in Section~\ref{exp:robust}, but here for arbitrary guard-values.

\begin{table}
\centering
%!TEX root = ../main.tex

\begin{tabular}{|l|cc|}
\hline
\rowcolor{gray!25}
&\multicolumn{1}{c}{\textsc{imitator-2.12}}&\multicolumn{1}{c|}{\textsc{rttar}}\\\hline
\texttt{A1}&DNF&0.08\\\hline
\texttt{Sched2.100.0}&7.16&492.73\\\hline
\texttt{Sched2.50.0}&4.95&273.36\\\hline
\texttt{fischer\_2}&DNF&0.26\\\hline
\texttt{fischer\_2\_robust}&DNF&0.25\\\hline
\texttt{fischer\_4}&DNF&47.96\\\hline
\texttt{fischer\_4\_robust}&DNF&50.26\\\hline

\end{tabular}
	\caption{Results for parameter-synthesis comparing \tar with \imitator. Time is given in seconds. DNF marks that the tool did not complete the computation within an hour.}
\label{tab:synth}
\end{table}
~
\begin{figure}
\centering
\includegraphics[width=5.2cm]{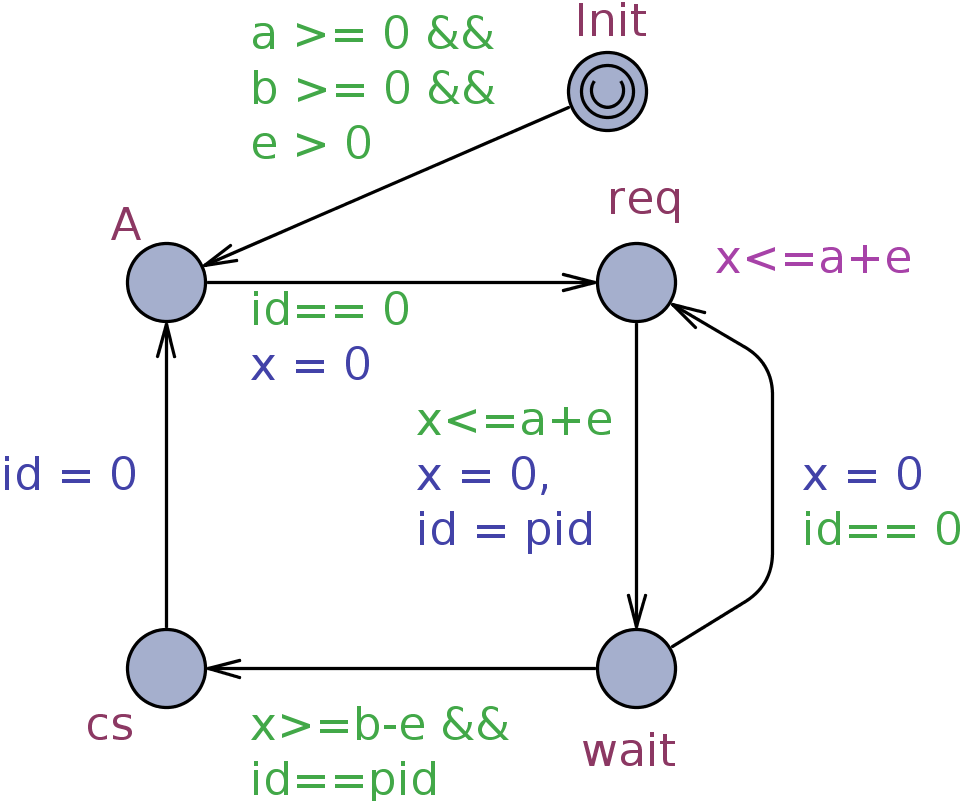}
\captionof{figure}{A \uppaal template for a single process in Fischers Algorithm. The variables \texttt{e}, \texttt{a} and \texttt{b} are parameters for $\epsilon$, lower and upper bounds for clock-values respectively.}
\label{fig:fischer}
\end{figure}

As illustrated by Table~\ref{tab:synth} the performance of \tar is slower than \imitator when \imitator is able to compute the results. On the other hand, when using \imitator to verify our motivating example from \figref{fig-ex1}, we observe that \imitator never terminates, due to the divergence of the polyhedra-computation. This is the effect illustrated in Table~\ref{tab-sym-comp}.

When trying to synthesize the parameters for Fischers algorithm, in all cases, \imitator times out and never computes a result.
For both two and four processes in Fischers algorithm, our tool detects that the system is safe if and only if $a < 0 \vee b < 0 \vee b - a > 0$. Notice that $a < 0 \vee b < 0$ is a trivial constraint preventing the system from doing anything. The constraint $b - a > 0$ is the only useful one. Our technique provides a formal proof that the algorithm is correct for $b -a >0$.

In the same manner, our technique can compute the most general constraint ensuring that Fischers algorithm is robust.
The result of \tar algorithm is that the system is robust iff
$ \epsilon \leq 0 \vee a < 0 \vee b < 0\vee b - a - 2\epsilon > 0$
 -- which for $\epsilon=0$ (modulo the initial non-zero constraint on $\epsilon$) reduces to the constraint-system obtained in the non-robust case.

 \subsection{Robustness of Timed Automata}\label{exp:robust}

To address the robustness problem for a real-time program $P$, we use the semi-algorithm presented in Section~\ref{sec:synth} and  reduce the robustness-checking problem to that of parameter-synthesis.
Notice the delimitation of the input-problems to robust-only instances from Section~\ref{sec:robust}.

\begin{table}
    \centering
	\begin{tabular}{|l
|rr|rr|}
\hline
\rowcolor{gray!25}
&\multicolumn{2}{c|}{\textsc{rttar\_robust}}&\multicolumn{2}{c}{\textsc{symrob}}\\\hline
\texttt{csma\_05}&32.38&\verb!1/3!&0.51&\verb! 1/3!\\\hline
\texttt{csma\_06}&87.55&\verb!1/3!&1.91&\verb! 1/3!\\\hline
\texttt{csma\_07}&294.30&\verb!1/3!&7.37&\verb! 1/3!\\\hline
\texttt{fischer\_04}&17.64&\verb!1/2!&0.19&\verb! 1/2!\\\hline
\texttt{fischer\_05}&102.50&\verb!1/2!&0.77&\verb! 1/2!\\\hline
\texttt{fischer\_06}&519.41&\verb!1/2!&2.83&\verb! 1/2!\\\hline
\texttt{M3}&17.14&$\infty$&\textbf{N/A}&\textbf{N/A}\\\hline
\texttt{M3c}&17.72&$\infty$&3.91&\verb! 250/3!\\\hline
\texttt{a}&3470.95&\verb!1/2!&19.66&\verb! 1/4!\\\hline
\end{tabular}
	\caption{Results for robustness analysis comparing \tar with \symrob. Time is given in seconds. N/A indicates that \symrob  was unable to compute  the robustness for  the given model.}
\label{tab:robustness}
\end{table}
As Table~\ref{tab:robustness} demonstrates, \symrob \cite{symrob} and \tar  do not always agree on the results.
Notably, since the TA \texttt{M3} contains strict guards, \symrob is unable to compute the robustness of it.
Furthermore, \symrob under-approximates $\epsilon$, an artifact of the so-called ``loop-acceleration''-technique and the polyhedra-based algorithm.
This can be observed in the modified model \texttt{M3c}, which is now analyzable by \symrob, but differs in results compared to \tar.
This is the same case with the model denoted \texttt{a}.
We experimented with $\epsilon$-values to confirm that \texttt{M3} is safe for all the values tested -- while \texttt{a} is safe only for values tested respecting $\epsilon<\frac{1}{2}$.
We can also see that our proposed method is significantly slower than the special-purpose algorithms deployed by \symrob, but in contrast to \symrob, it computes the maximal set of good paramaters.

\section{Conclusion}

We have proposed a version of the trace abstraction refinement approach to real-time programs.
We have demonstrated that our semi-algorithm can be used to solve the reachability problem for instances which are not solvable by state-of-the-art analysis tools.

Our algorithms can handle the general class of real-time programs that comprises of classical models for real-time systems including timed automata, stopwatch automata, hybrid automata and time(d) Petri nets.

As demonstrated in Section~\ref{sec:experiments},
our tool is capable of solving instances of reachability problems, robustness, parameter synthesis, that current tools are incapable of handling.

For future work we would like to improve the scalability of the proposed method, utilizing well known techniques such as extrapolations, partial order reduction~\cite{cassez-lpar-2015} and compositional verification~\cite{cassez-fsttcs-2014}.
Another short-term improvement is to use unsat cores to compute interpolant automata as proposed in~\cite{DBLP:conf/sigsoft/DietschHMNP17}.
Furthermore, we would like to extend our approach from reachability to more expressive temporal logics.

\paragraph{Acknowledgments.} 
The research was partially funded by  Innovation Fund Denmark center DiCyPS and ERC Advanced Grant LASSO.
Furthermore, these results was made possible by an external stay partially funded by Otto M\o nsted Fonden.

\bibliographystyle{plain} 
\bibliography{bibliography}

\end{document}